\newtheorem{theorem}{Theorem}
\newtheorem{lemma}{Lemma}
\newtheorem{corollary}{Corollary}
\newtheorem{problem}{Problem}
\newtheorem{fact}{Fact}
\newcommand{\eat}[1]{}
\newcommand{\calP}{{\mathcal P}}
\newcommand{\calT}{{\mathcal T}}
\newcommand{\size}[1]{|#1|\xspace}
\def \qas{quasi-identifiers\xspace}
\def \sa{sensitive attribute\xspace}
\def \sas{sensitive attributes\xspace}
\def \saspace{SA space\xspace}
\def \kanony{\textsc{$k$-Anonymity}\xspace}
\def \tclose{\textsc{$t$-Closeness}\xspace}
\def \vectfull{sensitive attribute value distribution\xspace}
\def \vect{SA distribution\xspace}
\def \vects{SA distributions\xspace}
\def \prob{\mathbf{P}\xspace}
\def \emd{{\rm EMD}\xspace}
\def \group{group\xspace}
\def \groups{groups\xspace}
\def \calT{\mathcal{T}}
\def \calP{\mathcal{P}}
\begin{document}

\title{On the Complexity of $t$-Closeness Anonymization and Related Problems}
\author{Hongyu Liang\footnotemark[1] \and Hao Yuan\footnotemark[2]}

\renewcommand{\thefootnote}{\fnsymbol{footnote}}

\footnotetext[1]{Institute for Interdisciplinary Information
Sciences, Tsinghua University, Beijing, China.
Email: \texttt{lianghy08@mails.tsinghua.edu.cn}. Supported in part by the National
Basic Research Program of China Grant 2011CBA00300, 2011CBA00301,
and the National Natural Science Foundation of China Grant 61033001,
61061130540, 61073174.
}

\footnotetext[2]{Department of Computer Science, City University of Hong Kong, Kowloon, Hong Kong, China.
Email: \texttt{haoyuan@cityu.edu.hk}. Supported by the Research Grants Council of Hong Kong under grant 9041688 (CityU 124411).}

\date{}
\maketitle

\begin{abstract}
An important issue in releasing individual data is to protect the sensitive information from being leaked and maliciously utilized. Famous privacy preserving principles that aim to ensure both data privacy and data integrity, such as $k$-anonymity and $l$-diversity, have been extensively studied both theoretically and empirically. Nonetheless, these widely-adopted principles are still insufficient to prevent attribute disclosure if the attacker has partial knowledge about the overall sensitive data distribution. The $t$-closeness principle has been proposed to fix this, which also has the benefit of supporting numerical sensitive attributes. However, in contrast to $k$-anonymity and $l$-diversity, the theoretical aspect of $t$-closeness has not been well investigated.

We initiate the first systematic theoretical study on the $t$-closeness principle under the commonly-used attribute
suppression model. We prove that for every constant $t$ such that $0\leq t<1$, it is NP-hard to find an optimal
$t$-closeness generalization of a given table. The proof consists of several reductions each of which works for
different values of $t$, which together cover the full range. To complement this negative result,
we also provide exact and fixed-parameter algorithms. Finally, we answer some open questions regarding the
complexity of $k$-anonymity and $l$-diversity left in the literature.
\end{abstract}

\section{Introduction}
Privacy-preserving data publication is an important and active topic in the database area. Nowadays many organizations need to publish microdata that contain certain information, e.g., medical condition, salary, or census data, of a collection of individuals, which are very useful for research and other purposes. Such microdata are usually released as a table, in which each record (i.e., row) corresponds to a particular individual and each column represents an attribute of the individuals. The released data usually contain \emph{sensitive attributes}, such as Disease and Salary, which, once leaked to unauthorized parties, could be maliciously utilized and harm the individuals.
Therefore, those features that can directly identify individuals, e.g., Name and Social Security Number, should be removed from the released table. See Table~\ref{tab:1} for example of an (imagined) microdata table that a hospital prepares to release for medical research. (Note that the IDs in the first column are only for simplicity of reference, but not part of the table.)

\begin{table}[t]
\centering
\begin{tabular}{|c||c|c|c||c|}
\hline
& \multicolumn{3}{c||}{Quasi-identifiers} & Sensitive\\
\hline
& Zipcode & Age& Education&Disease\\
\hline
1 & 98765 & 38 & Bachelor & Viral Infection\\
2 & 98654 & 39 & Doctorate & Heart Disease\\
3 & 98543 & 32 & Master & Heart Disease\\
4 & 97654 & 65 & Bachelor & Cancer\\
5 & 96689 & 45 & Bachelor & Viral Infection\\
6 & 97427 & 33 & Bachelor & Viral Infection\\
7 & 96552 & 54 & Bachelor & Heart Disease\\
8 & 97017 & 69 & Doctorate & Cancer\\
9 & 97023 & 55 & Master & Cancer\\
10 & 97009 & 62 & Bachelor & Cancer\\
\hline
\end{tabular}
\caption{The raw microdata table.}\label{tab:1}
\end{table}

\begin{table}
\centering
\begin{tabular}{|c||c|c|c||c|}
\hline
& \multicolumn{3}{c||}{Quasi-identifiers} & Sensitive\\
\hline
& Zipcode & Age& Education&Disease\\
\hline
1 & 98$\star$$\star$$\star$ & 3$\star$ & $\star$ & Viral Infection\\
2 & 98$\star$$\star$$\star$ & 3$\star$ & $\star$ & Heart Disease\\
3 & 98$\star$$\star$$\star$ & 3$\star$ & $\star$ & Heart Disease\\
\hline
4 & 9$\star$$\star$$\star$$\star$ & $\star$$\star$ & Bachelor & Cancer\\
5 & 9$\star$$\star$$\star$$\star$ & $\star$$\star$ & Bachelor & Viral Infection\\
6 & 9$\star$$\star$$\star$$\star$ & $\star$$\star$ & Bachelor & Viral Infection\\
7 & 9$\star$$\star$$\star$$\star$ & $\star$$\star$ & Bachelor & Heart Disease\\
\hline
8 & 970$\star$$\star$ & $\star$$\star$ & $\star$ & Cancer\\
9 & 970$\star$$\star$ & $\star$$\star$ & $\star$ & Cancer\\
10 & 970$\star$$\star$ & $\star$$\star$ & $\star$ & Cancer\\
\hline
\end{tabular}
\caption{A 3-anonymous partition.}\label{tab:2}
\end{table}

\begin{table}
\centering
\begin{tabular}{|c||c|c|c||c|}
\hline
& \multicolumn{3}{c||}{Quasi-identifiers} & Sensitive\\
\hline
& Zip Code & Age& Education&Disease\\
\hline
1 & 98$\star$$\star$$\star$ & 3$\star$ & $\star$ & Viral Infection\\
2 & 98$\star$$\star$$\star$ & 3$\star$ & $\star$ & Heart Disease\\
\hline
3 & 9$\star$$\star$$\star$$\star$ & $\star$$\star$ & $\star$ & Heart Disease\\
5 & 9$\star$$\star$$\star$$\star$ & $\star$$\star$ & $\star$ & Viral Infection\\
8 & 9$\star$$\star$$\star$$\star$ & $\star$$\star$ & $\star$ & Cancer\\
9 & 9$\star$$\star$$\star$$\star$ & $\star$$\star$ & $\star$ & Cancer\\
\hline
4 & 97$\star$$\star$$\star$ & $\star$$\star$ & Bachelor & Cancer\\
6 & 97$\star$$\star$$\star$ & $\star$$\star$ & Bachelor & Viral Infection\\
\hline
7 & 9$\star$$\star$$\star$$\star$ & $\star$$\star$ & Bachelor & Heart Disease\\
10 & 9$\star$$\star$$\star$$\star$ & $\star$$\star$ & Bachelor & Cancer\\
\hline
\end{tabular}
\caption{A 2-diverse partition of Table 1.}\label{tab:3}
\end{table}

Nonetheless, even with unique identifiers removed from the table, sensitive personal information can still be disclosed due to the \emph{linking attacks} \cite{kanonymity2,kanonymity}, which try to identify individuals from the combination of \emph{quasi-identifiers}. The quasi-identifiers are those attributes that can reveal partial information of the individual, such as Gender, Age, and Hometown. For instance, consider an adversary who knows that one of the records in Table~\ref{tab:1} corresponds to Bob. In addition he knows that Bob is around thirty years old and has a Master's Degree. Then he can easily identify the third record as Bob's and thus learns that Bob has a heart disease.

A widely-adopted approach for protecting privacy against such attacks is \emph{generalization}, which partitions the records into disjoint groups and then transforms the quasi-identifier values in each group to the same form. (The sensitive attribute values are not generalized because they are usually the most important data for research.) Such generalization needs to satisfy some \emph{anonymization principles}, which are designed to guarantee data privacy to a certain extent.

The earliest (and probably most famous) anonymization principle is the \emph{$k$-anonymity} principle proposed by Samarati \cite{kanonymity2} and Sweeney \cite{kanonymity}, which requires each group in the partition to have size at least $k$ for some pre-specified value of $k$; such a partition is called \emph{$k$-anonymous}. Intuitively, this principle ensures that every combination of quasi-identifier values appeared in the table is indistinguishable from at least $k-1$ other records, and hence protects the individuals from being uniquely recognized by linking attacks.
The $k$-anonymity principle has been extensively studied, partly due to the simplicity of its statement. Table~\ref{tab:2} is an example of a 3-anonymous partition of Table~\ref{tab:1}, which applies the commonly-used \emph{suppression method} to generalize the values in the same group, i.e., \emph{suppresses} the conflicting values with a new symbol `$\star$'.

A potential issue with the $k$-anonymity principle is that it is totally independent of the sensitive attribute values. This issue was formally raised by Machanavajjhala et al. \cite{ldiversity} who showed that $k$-anonymity is insufficient to prevent disclosure of sensitive values against the \emph{homogeneity attack}. For example, assume that an attacker knows that one record of Table~\ref{tab:2} corresponds to Danny, who is an elder with a Doctorate Degree. From Table~\ref{tab:2} he can easily conclude that Danny's record must belong to the third group, and hence knows Danny has a cancer since all people in the third group have the same disease. To forestall such attacks, Machanavajjhala et al. \cite{ldiversity} proposed the \emph{$l$-diversity} principle, which demands that at most a $1/l$ fraction of the records can have the same sensitive value in each group; such a partition is called \emph{$l$-diverse}. Table~\ref{tab:3} is an example of a 2-diverse partition of Table~\ref{tab:1}. (There are some other formulations of $l$-diversity, e.g., one requiring that each group comprises at least $l$ different sensitive values.)

Li et al. \cite{icde07_tclose} observed that the $l$-diversity principle is still insufficient to protect sensitive information disclosure against the \emph{skewness attack}, in which the attacker has partial knowledge of the \emph{overall} sensitive value distribution. Moreover, since $l$-diversity only cares whether two sensitive values are distinct or not, it fails to well support sensitive attributes with semantic similarities, such as numerical attributes (e.g., the salary).

To fix these drawbacks, Li et al. \cite{icde07_tclose} introduced the \emph{$t$-closeness} principle, which requires that the sensitive value distribution in any group differs from the overall sensitive value distribution by at most a threshold $t$. There is a metric space defined on the set of possible sensitive values, in which the maximum distance of two points (i.e., sensitive values) in the space is normalized to 1.
The distance between two probability distributions of sensitive values are then measured by the \emph{Earth-Mover Distance} (EMD) \cite{emd}, which is widely used in many areas of computer science.
Intuitively,
the EMD measures the minimum amount of work needed to transform one probability distribution to another by means of moving
distribution mass between points in the probability space. The EMD between two distributions in the (normalized) space is always between 0 and 1. We will give an example of a $t$-closeness partition of Table~\ref{tab:1} for some threshold $t$ later in Section~\ref{sec:pre}, after the related notation and definitions are formally introduced.

The $t$-closeness principle has been widely acknowledged as an enhanced principle that fixes the main drawbacks of previous approaches like $k$-anonymity and $l$-diversity. There are also many other principles proposed to deal with different attacks or for use of ad-hoc applications; see, e.g., \cite{cikm11_BLLW,icde07_worstcase,kdd11,sigmod07_NAC,sigmod06_XT,sigmod07_XT,cikm11_XKRP,icde07_ZKSY} and the references therein.

\subsection{Theoretical Models of Anonymization}
It is always assumed that the released table itself satisfies the considered principle ($k$-anonymity, $l$-diversity, or $t$-closeness), since otherwise there exists no feasible solution at all. Therefore, the trivial partition that puts all records in a single group always guarantees the principle to be met. However, such a solution is useless in real-world scenarios, since it will most probably produce a table full of `$\star$'s, which is undesirable in most applications. This extreme example demonstrates the importance of finding a balance between data privacy and \emph{data integrity}.

Meyerson and Williams \cite{pods04_kanony} proposed a framework for theoretically measuring the data integrity, which aims to find a partition (under certain constraints such as $k$-anonymity) that minimizes the number of suppressed cells (i.e., `$\star$'s) in the table. This model has been widely adopted for theoretical investigations of anonymization principles. Under this model, $k$-anonymity and $l$-diversity have been extensively studied; more detailed literature reviews will be given later.

However, in contrast to $k$-anonymity and $l$-diversity, the theoretical aspects of the $t$-closeness principle have not been well explored before. There are only a handful of algorithms designed for achieving $t$-closeness \cite{icde07_tclose,tkde10_tclose,sabre,alg_tclose_inf_theory}.
The algorithms given by Li et al. \cite{icde07_tclose,tkde10_tclose} incorporate $t$-closeness into  $k$-anonymization frameworks (Incognito \cite{incognito} and Mondrian \cite{mondrian}) to find a $t$-closeness partition. Cao et al. \cite{sabre} proposed the SABRE algorithm, which is the first framework tailored for $t$-closeness. The information-theoretic approach in \cite{alg_tclose_inf_theory} works for an ``average'' version of $t$-closeness.
None of these algorithms is guaranteed to have good worst-case performance. Furthermore, to the best of our knowledge, no computational complexity results of $t$-closeness have been reported in the literature.

\subsection{Our Contributions}
In this paper, we initiate the first systematic theoretical study on the $t$-closeness principle under the commonly-used suppression framework. First, we prove that for every constant $t$ such that $0\leq t<1$, it is NP-hard to find an optimal
$t$-closeness generalization of a given table. Notice that the problem becomes trivial when $t=1$, since the EMD between any two sensitive value distributions is at most 1, and hence putting each record in a distinct group provides a feasible solution that does not need to suppress any value at all, which is of course optimal. Our result shows that the problem immediately becomes hard even if the threshold is relaxed to, say, 0.999. At the other extreme, a 0-closeness partition demands that the sensitive value distribution in every group must be the same with the overall distribution. This seems to restrict the sets of feasible solutions in a very strong sense, and thus one might imagine whether there exists an efficient algorithm for dealing with this special case. Our result dashes the hope for this idea.
The proof of our hardness result actually consists of several different reductions. Interestingly, each of these reductions only work for a set of special values of $t$, but altogether they cover the full range $[0,1)$. We note that the hardness of $t_1$-closeness does not directly imply that of $t_2$-closeness for $t_1\neq t_2$, since they may have very different optimal objective values.

As a by-product of our proof, we establish the NP-hardness of $k$-anonymity when $k=cn$, where $n$ is the number of records and $c$ is any constant in $(0,1/2]$. To the best of our knowledge, this is the first hardness result for $k$-anonymity that works for $k=\Omega(n)$. The existing approaches for proving hardness of $k$-anonymity all fail to generalize to this range of $k$ due to inherent limits of the reductions. We note that $k=n/2$ is the largest possible value for which $k$-anonymity can be hard, because when $k>n/2$, any $k$-anonymous partition can only contain one group, namely the table itself.

To complement our negative results, we also provide exact and fixed-parameter algorithms for obtaining the optimal $t$-closeness partition. Our exact algorithm for $t$-closeness runs in time $2^{O(n)}\cdot O(m)$, where $n$ and $m$ are respectively the number of rows and columns in the input table. Together with a reduction that we derive (Lemma~\ref{lem:reduction}), this gives a $2^{O(n)}\cdot O(m)$ time algorithm for $k$-anonymity for \emph{all} values of $k$, thus generalizing the result in \cite{icalp10_kanony} which only works for constant $k$. We then prove that the problem is fixed-parameter tractable when parameterized by $m$ and the alphabet size of the input table. This implies that an optimal $t$-closeness partition can be found in polynomial time if the number of quasi-identifiers and that of distinct attribute values are both small (say, constants), which is true in many real-world applications.
(We say a problem is \emph{fixed-parameter tractable} with respect to some parameters $k_1,\ldots,k_r$, if there is an algorithm solving the problem that runs in time $h(k_1,\ldots,k_r)n^{O(1)}$, where $n$ is the size of the input and $h$ is an arbitrary computable function depending only on the parameters. Parameterized complexity has become a very active research area. For standard notation and definitions in parameterized complexity, we refer the reader to \cite{parabook}.) We obtain our fixed-parameter algorithm by reducing $t$-closeness to a special \emph{mixed integer linear program} in which some variables are required to take integer values while others are not. The integer linear program we derived for characterizing $t$-closeness may have its own interest in future applications.
We note that both of our algorithms work for all values of $t$.

Last but not least, we review the problems of finding optimal $k$-anonymous and $l$-diverse partitions, and answer two open questions left in the literature.
\begin{itemize}
\item
We prove that the 2-diversity problem can be solved in polynomial time, which complements the NP-hardness results for $l\geq 3$ given in \cite{edbt10_ldiversity}. (We notice that the authors of \cite{tcs12_ldiversity} claimed that 2-diversity was proved to be polynomial by \cite{edbt10_ldiversity}. However what \cite{edbt10_ldiversity} actually proved is that the special 2-diversity instances, in which there are only two distinct sensitive values, can be reduced to the matching problem and hence solved in polynomial time. They do not have results for general 2-diversity. To the best of our knowledge, ours is the first work to demonstrate the tractability of 2-diversity.)
\item
We then present an $m$-approximation algorithm for $k$-anonymity that runs in polynomial time for all values of $k$. (Recall that $m$ is the number of quasi-identifiers.) This improves the $O(k)$ and $O(\log k)$ ratios in \cite{icdt05_kanony,sigmod07_apx_kanony} when $k$ is relatively large compared to $m$. We note that the performance guarantee of their algorithms cannot be reduced even for small values of $m$, due to some intrinsic limitations (for example, \cite{sigmod07_apx_kanony} uses the tight $\Theta(\log k)$ approximation for $k$-set cover).
\end{itemize}

\subsection{Related Work}
It is known that finding an optimal $k$-anonymous partition of a given table is NP-hard for every fixed integer $k\geq 3$ \cite{pods04_kanony}, while it can be solved optimally in polynomial time when $k \leq 2$ \cite{icalp10_kanony}. The NP-hardness result holds even for very restricted cases, e.g., when $k=3$ and there are only three quasi-identifiers \cite{joco11_kanony,joco_para_kanony}. On the other hand, Blocki and Williams \cite{icalp10_kanony} gave a $2^{O(n)}\cdot O(m)$ time algorithm that finds an optimal $k$-anonymous partition when $k=O(1)$, where $n$ and $m$ are the number of records and attributes (i.e., rows and columns) of the input table respectively. They also showed this problem to be fixed-parameter tractable when $m$ and $|\Sigma|$ (the alphabet size of the table) are considered as parameters. The parameterized complexity of $k$-anonymity has also been studied in \cite{joco_para_kanony,fct11,joco09} with respect to different parameters.

Meyerson and Williams \cite{pods04_kanony} gave an $O(k\log k)$ approximation algorithm for $k$-anonymity, i.e., it finds a $k$-anonymous partition in which the number of suppressed cells is at most $O(k\log k)$ times the optimum.
The ratio was later improved to $O(k)$ by Aggarwal et al. \cite{icdt05_kanony} and to $O(\log k)$ by Park and Shim \cite{sigmod07_apx_kanony}. We note that the algorithms in \cite{pods04_kanony,sigmod07_apx_kanony} run in time $n^{O(k)}$, and hence are guaranteed to be polynomial only if $k=O(1)$, while the algorithm in \cite{icdt05_kanony} has a truly polynomial running time for all $k$.
There are also a number of heuristic algorithms for $k$-anonymity (e.g., Incognito \cite{incognito}), which work well in many real datasets but have poor worst-case performance guarantee.

Xiao et al. \cite{edbt10_ldiversity} are the first to establish a systematic theoretical study on $l$-diversity. They showed that finding an optimal $l$-diverse partition is NP-hard for every fixed integer $l\geq 3$ even if $m$, the number of quasi-identifiers, is any fixed integer not smaller than $l$. They also provided an $(l\cdot m)$-approximation algorithm. Dondi et al. \cite{tcs12_ldiversity} proved an inapproximability factor of $c\ln (l)$ for $l$-diversity where $c>0$ is some constant, and showed that the problem remains APX-hard even if $l=4$ and the table consists of only three columns. They also presented an $m$-approximation algorithm when the number of distinct sensitive values is constant, and gave some parameterized hardness results and algorithms.

\subsection{Paper Organization}
The rest of this paper is organized as follows. Section~\ref{sec:pre} introduces notation and definitions used throughout the paper, and then formally defines the problems. Section~\ref{sec:nphard} is devoted to proving the hardness of finding the optimal $t$-closeness partition, while Section~\ref{sec:exact} provides exact and parameterized algorithms. Sections~\ref{sec:kanony} and \ref{sec:2diversity} present our results for $k$-anonymity and 2-diversity, respectively. Finally, the paper is concluded in Section~\ref{sec:conclu} with some discussions and future research directions.

\section{Preliminaries}\label{sec:pre}
We consider a raw database that contains $m$ \qas (QIs) and a \sa (SA).\footnote{Following previous approaches, we only consider instances with one sensitive attribute. Our hardness result indicates that one sensitive attribute already makes the problem NP-hard. Meanwhile, it is easy to verify that our algorithms also work for the case where multiple sensitive attributes exist.}
Each record $t$ in the database is an $(m+1)$-dimensional vector drawn from $\Sigma^{m+1}$,
where $\Sigma$ is the alphabet of possible values of the attributes.
For $1\leq i\leq m$, $t[i]$ is the value of the $i$-th QI of $t$, and $t[m+1]$ is the value of the SA of $t$.
Let $\Sigma_s \subseteq \Sigma$ be the alphabet of possible SA values. A microdata table (or table, for short) $\mathcal{T}$
is a multiset of vectors (or rows) chosen from $\Sigma^{m+1}$, and we denote by $\size{\mathcal{T}}$ the size of $\mathcal{T}$, i.e., the number of vectors contained in $\mathcal{T}$. We will let $n=\size{\mathcal{T}}$ when the table $\mathcal{T}$ is clear in the context. Note that $\mathcal{T}$ may contain identical vectors since it can be a multiset. We also use $\mathcal{T}[j]$ to denote the $j$-th vector in $\mathcal{T}$ under some ordering, e.g., $\mathcal{T}[3][m+1]$ is the SA value of the third vector of $\mathcal{T}$.

\begin{table}[t]
\centering
\begin{tabular}{|c||c|c|c||c|}
\hline
& \multicolumn{3}{c||}{Quasi-identifiers} & Sensitive\\
\hline
& Zip Code & Age& Education&Disease\\
\hline
1 & 98765 & 38 & Bachelor & Viral Infection\\
2 & 98654 & 39 & Doctorate & Heart Disease\\
3 & 98543 & 32 & Master & Heart Disease\\
\hline
\multicolumn{5}{c}{\textbf{After generalization:}}\\
\hline
1 & 98$\star$$\star$$\star$ & 3$\star$ & $\star$ & Viral Infection\\
2 & 98$\star$$\star$$\star$ & 3$\star$ & $\star$ & Heart Disease\\
3 & 98$\star$$\star$$\star$ & 3$\star$ & $\star$ & Heart Disease\\
\hline
\end{tabular}
\caption{The first three records in Table 1.}\label{tab:sub}
\end{table}

Let $\star$ be a fresh character not in $\Sigma$.
For each vector $t\in \mathcal{T}$, let $t^*$ be the \emph{suppressor} of $t$ (inside $\calT$) defined as follows:
\begin{itemize}
\item $t^*[m+1]=t[m+1]$;
\item for $1\leq i\leq m$, $t^*[i]=t[i]$ if $t[i]=t'[i]$ for all
$t'\in \mathcal{T}$, and $t^*[i]=\star$ otherwise.
\end{itemize}

The \emph{cost} of a suppressor $t^*$ is $cost(t^*)=|\{1\leq i\leq m~|~t^*[i]=\star\}|$, i.e., the number of `$\star$'s in $t^*$. It is easy to see that all vectors in $\calT$ have the same suppressor if we only consider the quasi-identifiers.
The \emph{generalization} of $\mathcal{T}$ is defined as
$Gen(\mathcal{T})=\{t^*~|~t\in \mathcal{T}\}$. (Note that $Gen(\mathcal{T})$ is also a multiset.)
The \emph{cost} of the generalization of $\calT$ is $cost(\mathcal{T})=\sum_{t^*\in Gen(\mathcal{T})}cost(t^*)$,
i.e., the sum of costs of all the suppressors. Since all suppressors in $\calT$ have the same cost, we can equivalently write $cost(\mathcal{T})=\size{\mathcal{T}}\cdot cost(t^*)$ for any
$t^* \in Gen(\mathcal{T})$.

As an illustrative example, Table~\ref{tab:sub} consists of the first three record of Table~\ref{tab:1}, which contains eight QIs (we regard each digit of Zip-code and Age as a separate QI) and one SA. The generalization of Table~\ref{tab:sub} is also shown. In this case all suppressors have cost 5, and the cost of this generalization is $5\cdot 3=15$.

A \emph{partition} $\mathcal{P}$ of table $\mathcal{T}$ is a collection of pairwise disjoint non-empty subsets of $\mathcal{T}$ whose union equals $\mathcal{T}$. Each subset in the partition is called a \emph{group} or a \emph{sub-table}. The cost of the partition $\mathcal{P}$, denoted by $cost(\mathcal{P})$, is the sum of costs of all its \groups.
For example, the partition of Table~\ref{tab:1} given by Table~\ref{tab:2} has cost $5\cdot 3+6\cdot 4+5\cdot 3=54$.

\subsection{$t$-Closeness Principle}

We formally define the $t$-closeness principle introduced in \cite{icde07_tclose} for protecting data privacy. Let $\calT$ be a table, and assume without loss of generality that $\Sigma_s=\{1,2,\ldots,|\Sigma_s|\}$.
The \emph{sensitive attribute value space} (\saspace) is a normalized metric space $(\Sigma_s,d)$, where $d(\cdot,\cdot)$ is a distance
function defined on $\Sigma_s \times \Sigma_s$ satisfying that (1)$d(i,i)=0$ for any $i\in\Sigma_s$; (2)$d(i,j)=d(j,i)$ for all $i,j\in\Sigma_s$;
(3)$d(i,j)+d(j,k)\geq d(i,k)$ for $i,j,k\in \Sigma_s$ (this is called the triangle inequality); and (4)$\max_{i,j\in \Sigma_s}d(i,j)=1$ (this is called the normalized condition).

For a sub-table $M \subseteq \calT$ and $i\in \Sigma_s$, denote by $n(M,i)$ the number of vectors whose SA value equals $i$.
Clearly $|M|=\sum_{i\in\Sigma_s}n(M,i)$.
The \emph{\vectfull} (\vect) of $M$, denoted by $\prob(M)$, is a $|\Sigma_s|$-dimensional vector whose $i$-th coordinate is
$\prob(M)[i]=n(M,i)/|M|$ for $1\leq i\leq |\Sigma_s|$.
Thus $\prob(M)$ can be seen as the probability distribution of the SA values in $M$, assuming that each vector in $M$ appears with the same probability.
For a threshold $0\leq t\leq 1$, we say $M$ have \emph{$t$-closeness} (with $\calT$) if $\emd(\prob(M),\prob(\calT))\leq t$,
where $\emd(\mathbf{X},\mathbf{Y})$ is the \emph{Earth-Mover Distance} (EMD) between distributions $\mathbf{X}$ and $\mathbf{Y}$ \cite{emd}. A \emph{$t$-closeness partition} of $\calT$ is one in which
every \group has $t$-closeness with $\calT$.

Intuitively,
the EMD measures the minimum amount of work needed to transform one probability distribution to another by means of moving
distribution mass between points in the probability space; here a unit of work corresponds to moving a unit amount of probability mass by a unit of ground distance.
The EMD between two \vects $\mathbf{X}$ and $\mathbf{Y}$ can be formally defined as the optimal objective value of the following linear program \cite{emd,icde07_tclose}:
\begin{eqnarray*}
\textrm{Minimize~}\sum_{i=1}^{|\Sigma_s|}\sum_{j=1}^{|\Sigma_s|}d(i,j)f(i,j)& &\textrm{subject to:}\\
\sum_{j=1}^{|\Sigma_s|}f(i,j)=\mathbf{X}[i],& &\forall 1\leq i\leq |\Sigma_s|\\
\sum_{i=1}^{|\Sigma_s|}f(i,j)=\mathbf{Y}[j],& &\forall 1\leq j\leq |\Sigma_s|\\
f(i,j)\geq 0,& &\forall 1\leq i,j\leq |\Sigma_s|.
\end{eqnarray*}

The above constraints are a little different from those in \cite{icde07_tclose}; however they can be proved equivalent using the triangle inequality condition of the \saspace.
It is also easy to see that $\emd(\mathbf{X},\mathbf{Y})=\emd(\mathbf{Y},\mathbf{X})$.
By the normalized condition of the \saspace, we have $0\leq \emd(\mathbf{X},\mathbf{Y})\leq 1$ for any \vects $\mathbf{X}$ and $\mathbf{Y}$.

The \emph{equal-distance space} refers to a special \saspace in which each pair of distinct sensitive values have distance exactly 1. There is a concise formula for computing the EMD between two \vects in this space.
\begin{fact}[\cite{icde07_tclose}]
For any two \vects $\mathbf{X}$ and $\mathbf{Y}$ in the equal-distance space, we have
\begin{eqnarray*}
\emd(\mathbf{X},\mathbf{Y})=\frac{1}{2}\sum_{i=1}^{|\Sigma_s|}\left|\mathbf{X}[i]-\mathbf{Y}[i]\right|
=\sum_{1\leq i\leq |\Sigma_s|:\mathbf{X}[i]\geq \mathbf{Y}[i]}(\mathbf{X}[i]-\mathbf{Y}[i]).
\end{eqnarray*}
\end{fact}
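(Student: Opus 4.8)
The plan is to compute the value of the linear program defining $\emd(\mathbf{X},\mathbf{Y})$ directly, exploiting the fact that in the equal-distance space $d(i,j)=1$ whenever $i\neq j$ and $d(i,i)=0$. The second equality in the statement is purely arithmetic and I would dispose of it first: since $\mathbf{X}$ and $\mathbf{Y}$ are probability distributions, $\sum_i(\mathbf{X}[i]-\mathbf{Y}[i])=0$, so the sum of the positive terms $\mathbf{X}[i]-\mathbf{Y}[i]$ equals the sum of the absolute values of the negative ones, and each of these equals $\tfrac{1}{2}\sum_i|\mathbf{X}[i]-\mathbf{Y}[i]|$.

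For the first equality I would rewrite the objective of the transportation LP. Using $d(i,i)=0$ and $d(i,j)=1$ for $i\neq j$, the objective becomes $\sum_{i\neq j}f(i,j)=\big(\sum_{i,j}f(i,j)\big)-\sum_i f(i,i)=1-\sum_i f(i,i)$, where $\sum_{i,j}f(i,j)=\sum_i\mathbf{X}[i]=1$ by the row constraints. Hence minimizing the EMD objective is the same as maximizing the total ``diagonal mass'' $\sum_i f(i,i)$ over feasible flows. Every feasible $f$ satisfies $f(i,i)\le\sum_j f(i,j)=\mathbf{X}[i]$ and $f(i,i)\le\sum_k f(k,i)=\mathbf{Y}[i]$, so $\sum_i f(i,i)\le\sum_i\min(\mathbf{X}[i],\mathbf{Y}[i])$, giving the lower bound $\emd(\mathbf{X},\mathbf{Y})\ge 1-\sum_i\min(\mathbf{X}[i],\mathbf{Y}[i])$.

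It remains to exhibit a feasible flow achieving this, which is the only step requiring care. I would set $f(i,i)=\min(\mathbf{X}[i],\mathbf{Y}[i])$ and route the residuals: the leftover row supply at $i$ is $a_i=(\mathbf{X}[i]-\mathbf{Y}[i])^+$ and the leftover column demand at $j$ is $b_j=(\mathbf{Y}[j]-\mathbf{X}[j])^+$, with $\sum_i a_i=\sum_j b_j$ (both equal $\tfrac{1}{2}\sum_i|\mathbf{X}[i]-\mathbf{Y}[i]|$), so there is a nonnegative transportation $g$ from $(a_i)$ to $(b_j)$. The point is that $a_i>0$ only when $\mathbf{X}[i]>\mathbf{Y}[i]$ while $b_j>0$ only when $\mathbf{Y}[j]>\mathbf{X}[j]$, so these index sets are disjoint and $g$ is supported off the diagonal; setting $f(i,j)=g(i,j)$ for $i\neq j$ then yields a feasible flow with $\sum_i f(i,i)=\sum_i\min(\mathbf{X}[i],\mathbf{Y}[i])$. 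Finally, since $\min(\mathbf{X}[i],\mathbf{Y}[i])=\mathbf{X}[i]-(\mathbf{X}[i]-\mathbf{Y}[i])^+$, we obtain $\emd(\mathbf{X},\mathbf{Y})=1-\sum_i\min(\mathbf{X}[i],\mathbf{Y}[i])=\sum_i(\mathbf{X}[i]-\mathbf{Y}[i])^+$, which is exactly $\sum_{i:\mathbf{X}[i]\ge\mathbf{Y}[i]}(\mathbf{X}[i]-\mathbf{Y}[i])$. The main (and rather minor) obstacle is the disjoint-support observation that lets the residual transportation avoid the diagonal; apart from that the argument is bookkeeping, and one should also note in passing that the LP is feasible and bounded, so the optimum is attained.
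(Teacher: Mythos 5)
Your proof is correct. Note that the paper itself offers no argument for this statement: it is imported as a Fact with a citation to Li et al.\ (the original $t$-closeness paper), so there is nothing internal to compare against. Your LP-based derivation is the standard and complete way to establish it: rewriting the objective as $1-\sum_i f(i,i)$ using $d(i,j)=1$ for $i\neq j$ and total mass $1$, bounding the diagonal mass by $\sum_i\min(\mathbf{X}[i],\mathbf{Y}[i])$ via the row and column constraints, and then exhibiting a feasible flow attaining that bound by routing the residuals $a_i=(\mathbf{X}[i]-\mathbf{Y}[i])^+$ to $b_j=(\mathbf{Y}[j]-\mathbf{X}[j])^+$ (any transportation between them works, e.g.\ the proportional one $g(i,j)=a_ib_j/\sum_k a_k$, and your disjoint-support remark correctly guarantees it lives off the diagonal). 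The closing arithmetic, $1-\sum_i\min(\mathbf{X}[i],\mathbf{Y}[i])=\sum_i(\mathbf{X}[i]-\mathbf{Y}[i])^+=\tfrac12\sum_i|\mathbf{X}[i]-\mathbf{Y}[i]|$, uses only that both vectors sum to $1$, which also disposes of the second equality in the statement. In short, you have supplied a self-contained proof of a fact the paper takes on citation, identifying the EMD under the equal-distance metric with the total variation distance.
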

Therefore, in the equal-distance space, the EMD coincides with the \emph{total variation distance} between two distributions.

Let us go back to Table~\ref{tab:1} for an example. We let 1,2,and 3 denote the sensitive values ``Viral Inspection'', ``Heart Disease'', and ``Cancer'', respectively. Let the \saspace be the equal-distance space. The \vect of the whole table is then $(0.3,0.3,0.4)$.
Suppose we set the threshold $t=0.3$. It can be verified that Table~\ref{tab:3}, although being a 2-diverse partition, is not a $t$-closeness partition of Table~\ref{tab:1}. In fact, the \vect of the first group is $(0.5,0.5,0)$, and hence the EMD between it and the overall distribution is 0.4. (This example also reflects some property of the skewness attack that $l$-diversity suffers from. If an attacker can locate the record of Alice in the first group of Table~\ref{tab:3}, then he knows that Alice does not have a cancer. If he in addition knows that Alice comes from some district where people have a very low chance to have heart disease, then he would be confident that Alice has a viral infection.)
We instead give a $0.3$-closeness partition in Table~\ref{tab:tclose}. We can actually verify that it is even a 0.1-closeness partition.

\begin{table}[t]
\centering
\begin{tabular}{|c||c|c|c||c|}
\hline
& \multicolumn{3}{c||}{Quasi-identifiers} & Sensitive\\
\hline
& Zipcode & Age& Education&Disease\\
\hline
1 & 9$\star$$\star$$\star$$\star$ & $\star$$\star$ & $\star$ & Viral Infection\\
2 & 9$\star$$\star$$\star$$\star$ & $\star$$\star$ & $\star$ & Heart Disease\\
4 & 9$\star$$\star$$\star$$\star$ & $\star$$\star$ & $\star$ & Cancer\\
\hline
3 & 9$\star$$\star$$\star$$\star$ & $\star$$\star$ & $\star$ & Heart Disease\\
5 & 9$\star$$\star$$\star$$\star$ & $\star$$\star$ & $\star$ & Viral Infection\\
8 & 9$\star$$\star$$\star$$\star$ & $\star$$\star$ & $\star$ & Cancer\\
9 & 9$\star$$\star$$\star$$\star$ & $\star$$\star$ & $\star$ & Cancer\\
\hline
6 & 9$\star$$\star$$\star$$\star$ & $\star$$\star$ & Bachelor & Viral Infection\\
7 & 9$\star$$\star$$\star$$\star$ & $\star$$\star$ & Bachelor & Heart Disease\\
10 & 9$\star$$\star$$\star$$\star$ & $\star$$\star$ & Bachelor & Cancer\\
\hline
\end{tabular}
\caption{A $0.3$-closeness partition}\label{tab:tclose}
\end{table}

Now we are ready to define the main problem studied in this paper.
\begin{problem}
Given an input table $\mathcal{T}$, an \saspace $(\Sigma_s,d)$, and a threshold $t\in [0,1]$,
the \textsc{$t$-Closeness} problem requires to find a $t$-closeness partition of $\mathcal{T}$ with minimum cost.
\end{problem}

Finally we review another two widely-used principles for privacy preserving, namely $k$-anonymity and $l$-diversity, and the combinatorial problems associated with them. A partition is called \emph{$k$-anonymous} if all its \groups have size at least $k$. A (sub-)table $\mathcal{M}$ is called \emph{l-diverse} if at most $|\mathcal{M}|/l$ of the vectors in $\mathcal{M}$ have an identical SA value. A partition is called $l$-diverse if all its \groups are $l$-diverse.
\begin{problem}
Let $\mathcal{T}$ be a table given as input. The \textsc{$k$-Anonymity} ($l$-\textsc{Diversity}) problem requires to find a $k$-anonymous ($l$-diverse) partition of $\mathcal{T}$ with minimum cost.
\end{problem}

\section{NP-hardness Results}\label{sec:nphard}
In this section we study the complexity of the $t$-\textsc{Closeness} problem. The problem is trivial if the given threshold is $t=1$, since putting each vector in a distinct \group produces a 1-closeness partition with cost 0, which is obviously optimal. Our main theorem stated below indicates that this is in fact the only easy case.
\begin{theorem}\label{thm:closeness_nphard}
For any constant $t$ such that $0\leq t<1$, $t$-\textsc{Closeness} is NP-hard.
\end{theorem}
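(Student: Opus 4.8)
The plan is to reduce known NP-hard problems to $t$-\textsc{Closeness}, fixing the sensitive-attribute space to be the equal-distance space throughout, so that by the Fact above the $t$-closeness condition on a group $M$ becomes the purely combinatorial inequality $\sum_i \max\{0,\ \prob(M)[i]-\prob(\calT)[i]\} \le t$. The key observation that organizes everything is a clean special case: if the input table has all $n$ of its rows carrying \emph{pairwise distinct} sensitive values, then $\prob(\calT)$ is uniform and a short computation gives $\emd(\prob(M),\prob(\calT)) = 1 - |M|/n$ for every group $M$. Hence, in this regime, a partition is a $t$-closeness partition exactly when every group has size at least $(1-t)n$; that is, $t$-\textsc{Closeness} restricted to tables with all-distinct sensitive values \emph{is} \textsc{$k$-Anonymity} with $k = (1-t)n$. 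This yields a reduction from \textsc{$k$-Anonymity} to $t$-\textsc{Closeness} that transports hardness for $k = cn$ to hardness of $t$-\textsc{Closeness} at $t = 1-c$.

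Consequently the first task is to prove that \textsc{$k$-Anonymity} is NP-hard whenever $k = cn$ for a constant $c \in (0,1/2]$; by the above this immediately settles Theorem~\ref{thm:closeness_nphard} for all $t \in [1/2,1)$, and it is the by-product advertised in the introduction. The difficulty is that the standard \textsc{$k$-Anonymity} reductions either pin $k$ to a constant or bury a ``hard core'' sub-table much larger than $k$, so neither makes $k$ grow like $\Theta(n)$. I would instead start from a suitable NP-hard partition/covering problem (for instance a bounded-occurrence exact-cover or matching variant, or a blow-up of the known constant-$k$ instances) and build a table in which every feasible non-trivial group is forced to pool $\Theta(n)$ rows --- e.g.\ by appending a calibrated block of ``filler'' rows sharing a common quasi-identifier pattern --- while the minimum-cost partitions stay in bijection with the solutions of the source instance. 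The bound $c \le 1/2$ is unavoidable, since for $k > n/2$ the only $k$-anonymous partition is the one-group partition.

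The remaining range $t \in [0,1/2)$, where $c = 1-t > 1/2$ makes the $k$-anonymity route vacuous, needs its own reductions; this is the source of the ``several reductions, each for its own sub-range'' structure. Here I would not use all-distinct sensitive values but give them small multiplicities (possibly padding $\calT$ with dummy rows placed together in a single forced group so as to move $\prob(\calT)$), tuned so that the EMD constraint forces each group to be a \emph{balanced unit}. For $t = 0$ every group must reproduce the global proportions of each sensitive value exactly, which with $s \ge 3$ sensitive values encodes an $s$-dimensional-matching-type condition; for $0 < t < 1/2$ a controlled amount of imbalance is tolerated, but the quasi-identifier gadget is arranged so that spending this slack never undercuts the cost of a genuine combinatorial solution. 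Each such reduction is parameterized by the number of sensitive values and the padding ratio and is valid only on a sub-interval of $[0,1/2)$; one then checks that these sub-intervals tile $[0,1/2)$, so every constant $t$ in that range is covered. This stitching is genuinely needed, since hardness at one $t$ does not formally imply hardness at another (the optimal costs differ).

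I expect two main obstacles. First, building \textsc{$k$-Anonymity} instances in which $k$ equals a prescribed constant fraction of $n$ while the reduction remains faithful: the filler construction must be robust, so that no cheap group can ``escape'' the intended structure by recruiting filler rows in an unplanned way. Second, in each reduction the real work is the soundness direction --- ruling out ``cheating'' partitions that exploit the EMD slack, use very large groups, or mix distinct gadget components; the standard device is to equip each gadget element with enough private quasi-identifier coordinates that any unintended merge is strictly penalized, and verifying that this penalty always beats the available slack is the delicate part. A minor additional point is that integer multiplicities produce only rational EMD thresholds, so one must choose $n$ large and pick multiplicities whose induced threshold brackets the given real $t$ on the correct side.
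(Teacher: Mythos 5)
Your skeleton matches the paper's: the all-distinct-SA observation is exactly the paper's Lemma~\ref{lem:reduction} (uniform $\prob(\calT)$, $\emd=1-|M|/n$, so $t$-closeness $\equiv$ $(1-t)n$-anonymity), the plan to prove $(cn)$-\textsc{Anonymity} NP-hard for constant $c\in(0,1/2]$ is the paper's Theorem~\ref{thm:nphard}, and the matching-type reductions for small $t$ correspond to Theorems~\ref{thm:closeness_less13} and \ref{thm:closeness_last}. But the proposal stops at the plan precisely where the actual work lies. First, you give no reduction establishing hardness of $k$-\textsc{Anonymity} at $k=\Theta(n)$; you only say you would start from ``a bounded-occurrence exact-cover or matching variant, or a blow-up of the known constant-$k$ instances,'' and the paper explicitly notes that the known constant-$k$ reductions (from $k$-dimensional matching and relatives) cannot be pushed to $k=\Omega(n)$, while padding/filler tricks only transport hardness from a large-$k$ instance to smaller constant fractions, they cannot create it. The paper needs genuinely different source problems: \textsc{MinBisection} for $k=n/2$, a filler-row reduction from that for $c\le 1/3$, and \textsc{HalfClique} (with a nontrivial monotonicity argument about $f(x)=(n-x)\binom{x}{2}$) for $c\in(1/3,1/2)$. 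Without some such constructions your claim for $t\in[1/2,1)$ is unsupported.

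Second, for $t\in[0,1/2)$ your commitment to the equal-distance space ``throughout'' runs into a concrete obstruction at $t\ge 1/3$. In the natural 3D-matching encoding with three equally frequent SA values, a two-row group with distinct values has total-variation distance exactly $1/3$ from $\prob(\calT)$, so the ``every group has size $\ge 3$'' forcing collapses the moment $t\ge 1/3$; and the filler-padding repair you sketch (a fourth SA value carried by dummy rows, overall mass $q$) requires simultaneously $q\le t$ (so the intended triples are feasible) and $1-q\le t$ (so singleton filler groups are feasible), which is impossible for $t<1/2$. The paper escapes this by abandoning the equal-distance space on $[1/3,1/2)$: it uses a four-point metric with $d(4,i)=1/2$ and calibrates the filler fraction to $2t$, so triples have EMD exactly $t$ and filler singletons have EMD $(1-2t)/2\le t$. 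Your proposal never contemplates a non-uniform metric, so as written the sub-intervals you claim will ``tile $[0,1/2)$'' are not actually produced; the $[0,1/3)$ part is plausible and matches the paper, but $[1/3,1/2)$ is a genuine missing idea, not a routine parameter choice.
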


We will prove Theorem~\ref{thm:closeness_nphard} via several reductions, each covering a particular range of $t$, which altogether prove the theorem. We first present a result that relates \tclose to \kanony.

\begin{lemma}\label{lem:reduction}
There is a polynomial-time reduction from \kanony to \tclose with equal-distance space and $t=1-k/n$.
\end{lemma}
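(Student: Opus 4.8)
The plan is to give a cost-preserving reduction in which the group-size constraint of \kanony is re-expressed as an EMD constraint. Given a \kanony instance, namely a table $\calT$ with $n=\size{\calT}$ rows and $m$ quasi-identifiers (we may assume $1\le k\le n$, since for $k>n$ there is no $k$-anonymous partition and we can simply output a fixed infeasible \tclose instance), I would keep the $m$ quasi-identifier columns of $\calT$ untouched and overwrite the sensitive column so that the $n$ rows receive $n$ pairwise distinct values. I take $\Sigma_s=\{1,\ldots,n\}$ with the equal-distance metric and set the threshold to $t=1-k/n$, which lies in $[0,1)$ because $1\le k\le n$. This transformation is clearly computable in polynomial time.

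The heart of the proof is the claim that, after this relabelling, a \group has $t$-closeness with $\calT$ if and only if it contains at least $k$ rows. First I would note that the overall \vect is now uniform, $\prob(\calT)[i]=1/n$ for every $i$, and that a \group $M$ with $s=\size{M}$ rows has $\prob(M)[i]=1/s$ on each of the $s$ sensitive values occurring in $M$ and $\prob(M)[i]=0$ elsewhere. Since $s\le n$, on each of those $s$ coordinates $\prob(M)$ dominates $\prob(\calT)$, so the Fact for the equal-distance space yields $\emd(\prob(M),\prob(\calT))=s(1/s-1/n)=1-s/n$. Consequently $M$ has $t$-closeness iff $1-s/n\le 1-k/n$ iff $s\ge k$, so the $t$-closeness partitions of the new table are exactly the $k$-anonymous partitions of $\calT$.

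Finally I would observe that the reduction preserves the objective: the sensitive attribute is never suppressed, so the cost of any partition depends only on the (unchanged) quasi-identifier columns, and hence an optimal \tclose partition of the constructed instance is precisely an optimal $k$-anonymous partition of $\calT$, and conversely. I do not anticipate a genuine obstacle; the only point that must be handled with care is fixing the threshold to exactly $1-k/n$ so that the boundary case $\size{M}=k$ is admitted — recall the principle requires $\emd\le t$ rather than $\emd<t$, so a strictly larger $t$ would also admit groups of size $k-1$, and a strictly smaller one would reject groups of size exactly $k$.
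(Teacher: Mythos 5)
Your proposal is correct and follows essentially the same route as the paper: relabel the sensitive column with $n$ distinct values, use the equal-distance space, set $t=1-k/n$, and observe via the stated Fact that a size-$s$ group has $\emd$ exactly $1-s/n$ from the uniform overall distribution, so $t$-closeness of a group is equivalent to having size at least $k$, with costs unaffected since the sensitive attribute is never suppressed. (Your closing aside that a strictly larger $t$ ``would also admit groups of size $k-1$'' is only true once $t\geq 1-(k-1)/n$, but this remark plays no role in the argument.)
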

\begin{proof}
Let $\mathcal{T}$ be an input table of \kanony. We properly change the SA values of vectors in $T$ to ensure that all
their SA values are distinct; this can be done because the SA values are irrelevant to the objective of the \kanony
problem. Assume w.l.o.g. that the SA values are $\{1,2,\ldots,n\}$. Consider an instance of \tclose with the same
input table $\mathcal{T}$, in which $t=1-k/n$ and the \saspace is the equal-distance space.
The \vect of $\mathcal{T}$ is $(1/n,1/n,\ldots,1/n)$. In the \vect of each size-$r$ \group $\mathcal{T}_r$, there are exactly $r$ coordinates equal to $1/r$ and $n-r$ coordinates equal to $0$. It is easy to see that $\emd(\prob(\mathcal{T}), \prob(\mathcal{T}_r))=(n-r)(1/n)=1-r/n$. Hence, a \group has $t$-closeness if and only if it is of size at least $k$. Therefore, each $k$-anonymous partition of $\mathcal{T}$ is also a $t$-closeness partition, and vice versa. The lemma follows.
\end{proof}

By Lemma~\ref{lem:reduction} we can directly deduce the NP-hardness of \tclose when the threshold $t$ is given as input, using e.g. the NP-hardness of 3-\textsc{Anonymity} \cite{pods04_kanony}. However, to show hardness for constant $t$ that is bounded away from 1, we need $k/n=\Omega(1)$ and thus $k=\Omega(n)$. Unfortunately, the existing hardness results for \kanony only work for $k=O(1)$ and cannot be generalized to large values of $k$. For example, most hardness proofs use reductions from the $k$-dimensional matching problem, but this problem can be solved in polynomial time when $k=\Omega(n)$. Below we show the NP-hardness of \kanony for $k=\Omega(n)$ via reductions different from all previous approaches in the literature.

\begin{theorem}\label{thm:nphard}
For any constant $c$ such that $0<c\leq 1/2$, $(cn)$-\textsc{Anonymity} is NP-hard.
\end{theorem}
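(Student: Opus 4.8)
The plan is to reduce to $(cn)$-\textsc{Anonymity} from graph partitioning problems whose natural size parameter is already $\Omega(n)$, rather than from the matching-type problems that force $k=O(1)$. The common tool will be an ``edge gadget.'' Given a graph $H=(V,E)$, I would build a table $\calT_H$ with one row per vertex and one QI column per edge, where in the column of an edge $e=\{u,v\}$ the rows $u$ and $v$ carry one symbol and every other row carries a different symbol. Since every group under consideration has size at least $3$ (because $cn\to\infty$), a column is constant on a group $G$ precisely when $G$ contains no endpoint of the corresponding edge. I will also use the elementary monotonicity fact that splitting a group never increases cost: if $G=G'\cup G''$ then $cost(G')+cost(G'')\le cost(G)$, since a sub-multiset has no more non-constant columns than its superset. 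Consequently an optimal $(cn)$-anonymous partition may be assumed ``maximally split,'' so that every group has size in $[cn,2cn)$ and the number of groups lies in $(\tfrac1{2c},\tfrac1c]$.

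First I would handle $c=1/2$. Here $cn=n/2$, so a valid partition is either the whole table $\calT$ or a bipartition $(A,B)$ with $|A|=|B|=n/2$. For such a bipartition each edge inside $A$ or inside $B$ contributes $n/2$ to the cost and each cut edge contributes $n$, so $cost(\{A,B\})=\tfrac n2\bigl(|E|+|E(A,B)|\bigr)$, while $cost(\{\calT\})=n|E|$ is never smaller. Hence the optimal cost equals $\tfrac n2$ times $|E|$ plus the minimum bisection width of $H$, so an algorithm for $(n/2)$-\textsc{Anonymity} would decide \textsc{Minimum Bisection}; since that problem is NP-hard, Theorem~\ref{thm:nphard} follows for $c=1/2$.

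For general $c\in(0,1/2)$ I would split into two ranges. For $c\le 1/3$, keep the $c=1/2$ construction (with anonymity parameter $n/2$) and pad it: add $\Theta(n^3)$ extra QI columns together with a block of $D$ mutually identical ``dummy'' rows, designed so that each new column is constant on every all-original and every all-dummy group but non-constant on any group mixing the two. Their number makes a mixed group (of size $\ge n/2$) cost at least $\tfrac n2\cdot\Theta(n^3)$, which exceeds the honest cost $\le n|E|<n^3$; so in an optimum every group is pure, the dummies cost nothing, and the original rows are forced into an $(n/2)$-anonymous sub-partition. Choosing $D=n\cdot\frac{1-2c}{2c}$ makes the new anonymity parameter exactly $cn'$ and keeps $D\ge n/2$ (this is where $c\le 1/3$ is used), so the dummies can legally form one group; thus $(cn')$-\textsc{Anonymity} again solves \textsc{Minimum Bisection}. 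For $c\in(1/3,1/2)$, such $c$ force the ``maximally split'' optimum to use exactly two groups, with sizes confined to the window $[cn,(1-c)n]$; here I would again apply the edge gadget (to a suitable, e.g.\ regular, graph) and compute that the anonymization cost becomes, up to an affine change of variables, the objective of a balance-constrained minimum-cut problem, which I would show NP-hard separately.

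The hard part will be exactly this last regime $1/3<c<1/2$ (and, more generally, any $k$ strictly between a constant and $n/2$): the anonymity constraint only lower-bounds group sizes, so it does not by itself pin down either the number of groups or their balance. The reduction must therefore either install gadgets that enforce the intended group structure without disturbing the affine relationship between cost and cut size, or correctly identify the balance-constrained graph-partitioning problem that the resulting optimization coincides with and establish its NP-hardness. Getting that source problem right and matching it cleanly to the anonymization objective is where I expect most of the effort to go.
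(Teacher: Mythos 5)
Your treatment of $c=1/2$ (reduction from \textsc{MinBisection} via the one-column-per-edge gadget, with cost $\tfrac n2(|E|+|E(A,B)|)$) and of $0<c\le 1/3$ (padding with identical dummy rows and expensive separating columns so that $cn'=n/2$ and the dummies can form their own legal group) matches the paper's Theorems~\ref{thm:nphard_half} and \ref{thm:nphard_part2} almost exactly, including the observation that $c\le 1/3$ is precisely what lets the dummy block satisfy the size constraint. Those two cases are sound.

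The genuine gap is the regime $1/3<c<1/2$, which you explicitly leave unresolved. Two concrete problems with the direction you sketch. First, the optimization that the bare edge gadget produces there is not a balance-constrained minimum cut: with two groups $A,B$ of sizes $s$ and $n-s$, the cost is $nm-s|E_{BB}|-(n-s)|E_{AA}|$, i.e.\ you must \emph{maximize} a size-weighted sum of the edge counts \emph{inside} the two parts --- a density-type objective, not a cut objective --- and you have not identified this problem or shown it NP-hard, which is itself the crux. Second, in this regime the size constraint only forces $cn\le s\le(1-c)n$, so the balance is not pinned down and the ``affine change of variables'' you hope for does not exist uniformly in $s$. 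The paper (Theorem~\ref{thm:nphard_part3}) resolves both issues at once: it reduces from \textsc{HalfClique}, appends $n'-n=n(\tfrac1{2c}-1)$ extra rows whose QI values are \emph{pairwise distinct} (not identical), so that any group touching a new row is fully suppressed; the cost of a two-group partition with one all-old group $\calT_1=V_1$ then collapses to $n'm-|V_1|\cdot|E_{22}|$, and an elementary monotonicity argument on $f(x)=(n-x)\binom{x}{2}$ over $x\in[1,n/2]$ shows the bound $n'm-\tfrac n2\binom{n/2}{2}$ is attained iff $|V_2|=n/2$ and $V_2$ is a clique. Without this (or an equivalent device that forces one group to be ``garbage'' and identifies a genuinely NP-hard source problem), your proof covers only $c\in(0,1/3]\cup\{1/2\}$ and the theorem as stated is not established.
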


To the best of our knowledge, Theorem~\ref{thm:nphard} is the first hardness result for \kanony when $k=\Omega(n)$. We note that the constant $1/2$ is the best possible, since for any $k>n/2$, a $k$-anonymous partition can only contain one \group, namely the table itself. We first prove the following result, which will be used as a starting point in further reductions.

\begin{theorem}\label{thm:nphard_half}
$(n/2)$-\textsc{Anonymity} is NP-hard.
\end{theorem}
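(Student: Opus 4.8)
The plan is to exploit the rigidity of feasible solutions when $k=n/2$. Assuming $n$ is even (which our construction will guarantee), three \groups of size at least $n/2$ cannot fit into a table with $n$ rows, so every $(n/2)$-anonymous partition is either the trivial partition $\{\calT\}$ or a \emph{balanced bipartition} into two \groups of size exactly $n/2$. Hence $(n/2)$-\textsc{Anonymity} amounts to finding the cheapest balanced bipartition of the rows, and I would reduce from \textsc{Minimum Bisection}: given a graph $G=(V,E)$ with $|V|=2\nu$ (and $|E|\geq 1$, $\nu\geq 2$) and an integer $\beta$, decide whether $V$ can be split into two sets of size $\nu$ with at most $\beta$ crossing edges. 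This is a classically NP-hard problem, and we may assume $\beta<|E|$, since otherwise the instance is trivially a yes-instance.

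From such a $G$ I would build a table $\calT$ with one row per vertex (so $n=2\nu$) and one QI column per edge (the SA values are irrelevant to \kanony, so they may be set arbitrarily). For $e=\{u,v\}\in E$, the column $c_e$ takes value $1$ on row $u$, value $2$ on row $v$, and value $0$ on every other row. Every column is then non-constant over all of $\calT$, so $\{\calT\}$ has cost $n\cdot|E|=2\nu|E|$. Now consider a balanced bipartition $(V_1,V_2)$ with $|V_1|=|V_2|=\nu$, and let $d(V_i)$ denote the number of columns suppressed inside \group $V_i$, so that the partition has cost $\nu\,(d(V_1)+d(V_2))$. For a non-crossing edge $e$ with both endpoints in, say, $V_1$, the two distinct values $1,2$ both occur among the rows of $V_1$, so $c_e$ is suppressed in $V_1$, while $V_2$ is all-$0$ on $c_e$, so it is not suppressed there — net contribution $1$. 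For a crossing edge $e$ with $u\in V_1$, $v\in V_2$, column $c_e$ is non-constant on each part (value $1$ at $u$ versus $0$ elsewhere in $V_1$; value $2$ at $v$ versus $0$ elsewhere in $V_2$), giving contribution $2$. Therefore $d(V_1)+d(V_2)=|E|+|\mathrm{cut}(V_1,V_2)|$ and the partition has cost exactly $\nu\bigl(|E|+|\mathrm{cut}(V_1,V_2)|\bigr)$.

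Putting this together: since $\beta<|E|$ we have $\nu(|E|+\beta)<2\nu|E|$, so any $(n/2)$-anonymous partition of cost at most $\nu(|E|+\beta)$ cannot be $\{\calT\}$; it must be a balanced bipartition $(V_1,V_2)$, and then $|\mathrm{cut}(V_1,V_2)|\leq\beta$. Conversely, every bisection of $G$ with at most $\beta$ crossing edges yields an $(n/2)$-anonymous partition of $\calT$ of cost at most $\nu(|E|+\beta)$. Thus $G$ has a bisection of size $\leq\beta$ iff $\calT$ admits an $(n/2)$-anonymous partition of cost $\leq\nu(|E|+\beta)$, and the construction is clearly polynomial-time, which proves Theorem~\ref{thm:nphard_half}.

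The per-column accounting is routine; the only point requiring care is preventing the trivial partition $\{\calT\}$ from being an unintended optimum, which is exactly why we preprocess to assume $\beta<|E|$ (equivalently, one could append a small rigid gadget to $G$ forcing its minimum bisection to be strictly below $|E|$). A secondary sanity check is that $n=2\nu$ is even, so that ``$k=n/2$'' is an integer and the one-or-two-\group dichotomy above holds verbatim; this is true by construction. I do not anticipate a real difficulty in this step — the substantive work of the section is the subsequent amplification to arbitrary $k=cn$ with $c\in(0,1/2]$ in Theorem~\ref{thm:nphard} and the further reductions needed to push the $t$-\textsc{Closeness} hardness down through the whole range $[0,1)$.
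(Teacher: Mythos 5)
Your proposal is correct and is essentially the paper's own proof: the same reduction from \textsc{MinBisection} with one row per vertex and one column per edge, the same observation that any nontrivial $(n/2)$-anonymous partition is a balanced bipartition, and the same per-column accounting giving cost $(n/2)(|E|+|\mathrm{cut}|)$. The only cosmetic differences are your $\{0,1,2\}$ column labeling in place of the paper's vertex-index labeling, and your decision-version framing (with the harmless $\beta<|E|$ normalization) in place of the paper's identity $OPT'=n(|E|+OPT)/2$.
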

\begin{proof}
We will present a polynomial-time reduction from the minimum graph bisection (\textsc{MinBisection}) problem to $(n/2)$-\textsc{Anonymity}. \textsc{MinBisection} is a well-known NP-hard problem \cite{book_npc,bisection} defined as follows: given an undirected graph, find a partition of its vertices into two equal-sized halves so as to minimize the number of edges with exactly one endpoint in each half.

\begin{figure}[t]
\centering
\includegraphics[height=3cm]{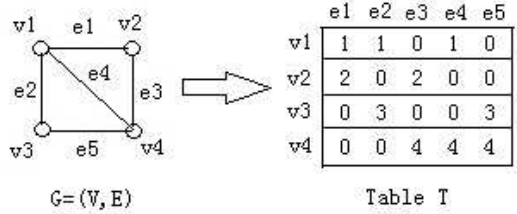}
\caption{Reduction from \textsc{MinBisection} to $(n/2)$-\textsc{Anonymity}.}
\label{fig:bisection}
\end{figure}

Let $G=(V,E)$ be an input graph of \textsc{MinBisection}, where $|V|\geq 4$ is even. Suppose $V=\{v_1,v_2,\ldots,v_n\}$ and $E=\{e_1,e_2,\ldots,e_m\}$. In what follows we construct a table $\mathcal{T}$ of size $n=|V|$ that contains $m=|E|$ \qas. (The \sas are useless in $k$-\textsc{Anonymity} so they will not appear.) This table will serve as the input to the $k$-\textsc{Anonymity} problem with $k=n/2$. Intuitively each row (or vector) of $\mathcal{T}$ corresponds to a vertex in $V$, while each column (or QI) of $\mathcal{T}$ corresponds to an edge in $E$. The alphabet $\Sigma$ is $\{1,2,\ldots,n\}$. For each $i\in [n]$ and $j \in [m]$,\footnote{We use $[q]$ to interchangeably denote $\{1,2,\ldots,q\}$.} let $\mathcal{T}[i][j]=i$ if $v_i \in e_j$, and $\mathcal{T}[i][j]=0$ if $v_i \not\in e_j$. Thus each column contains exactly two non-zero elements corresponding to the two endpoints of the associated edge. See Figure~\ref{fig:bisection} for a toy example. It is easy to see that $\calT$ can be constructed in polynomial time.

Before delving into the reduction, we first prove a result concerning the partition cost of $\mathcal{T}$. Any $(n/2)$-anonymous partition of $\calT$ contains at most two \groups. For the trivial partition that only contains $\calT$ itself, the cost is $n\cdot m$ because all elements in $\calT$ should be suppressed. Thus an $(n/2)$-anonymous partition with minimum cost should consist of exactly two \groups. Suppose $\calP=\{\calT_1,\calT_2\}$ is an $(n/2)$-anonymous partition of $\calT$ where $|\calT_1|=|\calT_2|=n/2$. Let $\{V_1,V_2\}$ be the corresponding partition of $V$ (recall that each vector in $\calT$ corresponds to a vertex in $V$). Consider $Gen(\calT_1)$, the generalization of $\calT_1$. For any column $j \in [m]$, if some endpoint of $e_j$, say $v_i$, belongs to $V_1$, then $\calT[i][j]=i$. By our construction of $\calT$, any other element in the $j$-th column does not equal to $i$. Since $|\calT_1|\geq n/2\geq 2$, column $j$ of $\calT_1$ must be suppressed to $\star$. On the other hand, if none of $e_j$'s endpoints belongs to $V_1$, then column $j$ of $\calT_1$ contains only zeros and thus can stay unsuppressed. Therefore, we obtain
\begin{equation}\label{equ:onepart}
cost(\calT_1)= |\calT_1| \cdot (|E_{11}|+|E_{12}|) = n(|E_{11}|+|E_{12}|)/2,
\end{equation}
where $E_{pq}$ denotes the set of edges with one endpoint in $V_p$ and another in $V_q$, for $p,q\in\{1,2\}$.
Similarly we have $cost(\calT_2)=n(|E_{22}|+|E_{12}|)/2$. Hence the cost of the partition $\calP$ is
\begin{equation}\label{equ:cut_cost}
cost(\calP)=\sum_{p=1}^{2}cost(\calT_p)=n(|E|+|E_{12}|)/2,
\end{equation}
noting that $|E|=|E_{11}|+|E_{12}|+|E_{22}|$.

We now prove the correctness of the reduction. Let $OPT$ be the minimum size of any cut $\{V_1,V_2\}$ of $G$ with $|V_1|=|V_2|$, and $OPT'$ be the minimum cost of any $(n/2)$-anonymous partition of $\calT$. We prove that $OPT'=n(|E|+OPT)/2$, which will complete the reduction from \textsc{MinBisection} to $(n/2)$-\textsc{Anonymity}. Let $\{V_1,V_2\}$ be the cut of $G$ achieving the optimal cut size $OPT$, where $|V_1|=|V_2|=n/2$. Using notation introduced before, we have $OPT=|E_{12}|$. Let $\calP=\{\calT_1,\calT_2\}$ where $\calT_p=\{\calT[i]~|~v_i \in V_p\}$ for $p\in\{1,2\}$. Clearly $\calP$ is an $(n/2)$-anonymous partition of $\calT$. By Equation (\ref{equ:cut_cost}) we have $OPT' \leq cost(\calP)=n(|E|+OPT)/2$.

On the other hand, let $\calP'=\{\calT'_1,\calT'_2\}$ be an $(n/2)$-anonymous partition with $cost(\calP')=OPT'$. We have $|\calT'_1|=|\calT'_2|=n/2$. Consider the partition $\{V'_1,V'_2\}$ of $V$ with $V'_p=\{v_i~|~\calT[i] \in \calT'_p\}$ for $p\in\{1,2\}$. Since $|V'_1|=|V'_2|=n/2$, we have $OPT \leq |E'_{12}|$ where $E'_{12}$ denotes the set of edges with one endpoint in $V'_1$ and another in $V'_2$. By Equation (\ref{equ:cut_cost}) we have $OPT'=n(|E|+|E'_{12}|)/2 \geq n(|E|+OPT)/2$. Combined with the previously obtained inequality $OPT'\leq n(|E|+OPT)/2$, we have shown that $OPT'=n(|E|+OPT)/2$. By the analyses we also know that an optimal $(n/2)$-anonymous partition of $\calT$ can easily be transformed to an optimal equal-sized cut of $G$. This finishes the reduction from \textsc{MinBisection} to $(n/2)$-\textsc{Anonymity}, and completes the proof of Theorem~\ref{thm:nphard_half}.
\end{proof}


\begin{figure}
\centering
\includegraphics[height=3.2cm]{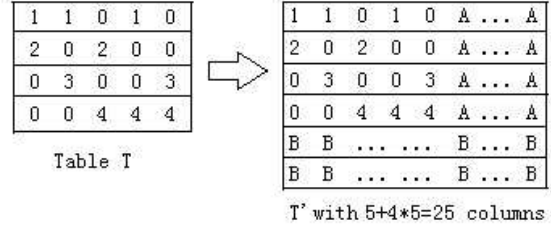}
\caption{Reduction from $(n/2)$-\textsc{Anonymity} to $(n/3)$-\textsc{Anonymity}.}
\label{fig:2}
\end{figure}

\begin{theorem}\label{thm:nphard_part2}
For any constant $c$ such that $0<c\leq 1/3$, $(cn)$-\textsc{Anonymity} is NP-hard.
\end{theorem}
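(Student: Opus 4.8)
The plan is to reduce $(n/2)$-\textsc{Anonymity}, which is NP-hard by Theorem~\ref{thm:nphard_half}, to $(cn)$-\textsc{Anonymity}. Let $\calT$ be an input table of $(n/2)$-\textsc{Anonymity} with $n$ rows and $m$ columns. I would build $\calT'$ from $\calT$ by appending $D := n(1-2c)/(2c)$ identical ``dummy'' rows, each equal to the vector having a fresh symbol $\sigma$ (not occurring in $\calT$) in every coordinate. Then $n':=|\calT'| = n+D = n/(2c)$, so the anonymity constraint on $\calT'$ reads ``every group has size at least $cn' = n/2$''. The hypothesis $c\le 1/3$ enters at exactly one point: it is equivalent to $D\ge n/2 = cn'$, i.e.\ to the fact that the $D$ dummy rows alone form a legal group. (A minor integrality point, namely that $D$ and $cn'$ must be integers, is handled by first padding the table — equivalently, padding the \textsc{MinBisection} instance in the proof of Theorem~\ref{thm:nphard_half} with a constant number of isolated vertices, i.e.\ all-zero rows — so that $n$ is divisible by a suitable constant depending only on $c$; this changes nothing else.)

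I would then show that $\calT$ has an $(n/2)$-anonymous partition of cost $\le B$ iff $\calT'$ has a $(cn')$-anonymous partition of cost $\le B$, for the same bound $B$; together with polynomiality of the construction this proves the theorem. One direction is easy: to an $(n/2)$-anonymous partition of $\calT$, add the group consisting of all $D$ dummies; it is legal because $D\ge cn'$ and has cost $0$ since all its rows coincide, and the cost of every other group is unchanged. For the other direction, let $\mathcal{Q}$ be a $(cn')$-anonymous partition of $\calT'$; I must show $cost(\mathcal{Q})\ge OPT$, where $OPT$ denotes the optimal $(n/2)$-anonymous cost of $\calT$ (note $OPT\le nm$ via the trivial one-group partition). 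Call a group \emph{mixed} if it contains both an original row and a dummy row. Since $\sigma$ appears in no original row, a mixed group has all $m$ columns suppressed and size $\ge cn' = n/2$, hence cost $\ge mn/2$. If $\mathcal{Q}$ has $\ge 2$ mixed groups then $cost(\mathcal{Q})\ge mn\ge OPT$; and if it has no mixed group, the pure-original groups of $\mathcal{Q}$ form, by themselves, an $(n/2)$-anonymous partition of $\calT$, so again $cost(\mathcal{Q})\ge OPT$.

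The remaining and decisive case is that $\mathcal{Q}$ has exactly one mixed group $G_1$, containing a set $A$ of $r\ge 1$ original rows (so these rows are ``hidden'' among dummies and the pure-original groups of $\mathcal{Q}$ no longer cover $\calT$). If $r = n$, then $cost(G_1)\ge mn\ge OPT$. Otherwise the remaining $n-r\ge 1$ original rows lie in pure-original groups each of size $\ge n/2$, and since $n-r<n<2(n/2)$ there is exactly one such group $G_2$, necessarily of size $n-r$, which forces $r\le n/2$. I would then extend $A$ by moving $n/2-r$ rows from $G_2$ into it, obtaining a size-$(n/2)$ sub-multiset $W$ of $\calT$ with complement $W' = \calT\setminus W\subseteq G_2$ of size $n/2$; then $\{W,W'\}$ is an $(n/2)$-anonymous partition of $\calT$. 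Since any $n/2$ rows suppress at most $m$ columns, $cost(W)\le mn/2$; and since cost cannot increase when passing to a sub-multiset, $cost(W')\le cost(G_2)$. Therefore $OPT\le cost(W)+cost(W')\le mn/2 + cost(G_2)\le cost(G_1)+cost(G_2)\le cost(\mathcal{Q})$, which closes the last case. I expect this last case to be the main obstacle: it is precisely where ``$c\le 1/3$'' is used (through ``$D\ge n/2$'', which both keeps the dummies bundled into one group and limits the surviving pure-original groups to at most one), and the little argument of rounding an unbalanced sub-multiset up to a balanced $(n/2)$-anonymous partition is the key device for pulling an anonymization lower bound back out of $\mathcal{Q}$.
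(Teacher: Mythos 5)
Your reduction is correct, and it follows the same skeleton as the paper's (pad the $(n/2)$-\textsc{Anonymity} instance with identical dummy rows so that the relative threshold $cn'$ equals $n/2$), but the gadget and the analysis of the reverse direction are genuinely different. The paper additionally appends $nm$ extra QI columns, filled with one fresh symbol on the original rows and another on the dummies; this makes any mixed group cost at least $2nm>nm\geq OPT$, so mixed groups can simply be excluded from an optimal partition and the two cost functions coincide exactly ($OPT=OPT'$). You keep the column count at $m$, which forces you to confront the possibility that an optimal $(cn')$-anonymous partition of $\calT'$ hides up to $n/2$ original rows inside a single mixed group; your repair argument --- rebalancing the hidden rows together with part of the unique surviving pure-original group into a balanced two-group partition $\{W,W'\}$ of $\calT$, and charging $cost(W)\leq mn/2$ against the mixed group's cost $\geq mn/2$ --- is the new ingredient, and it is sound (the inequalities $cost(W')\leq cost(G_2)$ and $cost(G_1)\geq m\cdot cn'=mn/2$ both check out). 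What each buys: the paper's column blow-up makes the analysis one line at the price of a larger instance; your version preserves the number of QIs, which is a slightly stronger statement. One small inaccuracy in your commentary only (not in the proof): the hypothesis $c\leq 1/3$, i.e.\ $D\geq cn'$, is used solely in the forward direction to make the all-dummy group feasible; in the reverse direction the bound of at most one pure-original group comes from comparing the $n$ original rows against the threshold $cn'=n/2$, not from the size of $D$.
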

\begin{proof}
Fix $0<c\leq 1/3$. We reduce $(n/2)$-\textsc{Anonymity} to $(cn)$-\textsc{Anonymity}, which will prove the NP-hardness of the latter due to Theorem~\ref{thm:nphard_half}. Let $\calT$ be an instance of $(n/2)$-\textsc{Anonymity} with $n$ rows and $m$ QI columns. Choose two fresh symbols $\lambda_1,\lambda_2$ not appearing in $\calT$. We construct a new table $\calT'$ with $n'=n/2c$ rows and $m'=m+nm$ QI columns as follows\footnote{Here we assume $n/2c$ is an integer, otherwise we can use $\lfloor n/2c\rfloor$ instead and get the same result, with more tedious analyses. Similar issues appear also in other proofs, which we will not mention again.}.
For all $1\leq i\leq n$, let $\calT'[i][j]=\calT[i][j]$ for $1\leq j\leq m$, and $\calT'[i][j]=\lambda_1$ for $m+1\leq j\leq m'$. For all $n+1\leq i\leq n'$ and $1\leq j\leq m'$, let $\calT'[i][j]=\lambda_2$. This finishes the description of $\calT'$. See Figure~\ref{fig:2} for an example where $c=1/3$, $\lambda_1=\textrm{`A'}$, and $\lambda_2=\textrm{`B'}$. Clearly $\calT'$ can be constructed in polynomial time.

Let $OPT$ denote the minimum cost of an $(n/2)$-anonymous partition of $\calT$ and $OPT'$ be the minimum cost of a $(cn')$-anonymous partition of $\calT'$. We next prove $OPT=OPT'$, which will complete the reduction from $(n/2)$-\textsc{Anonymity} to $(cn)$-\textsc{Anonymity}.

On one hand, let $\calP=\{\calT_1,\calT_2\}$ be an $(n/2)$-anonymous partition of $\calT$ with $cost(\calP)=OPT$. We have $|\calT_1|=|\calT_2|=n/2$. Define a partition $\calP'$ of $\calT'$ as $\{\calT'_1,\calT'_2,\calT'_3\}$, where $\calT'_p=\{\calT'[i]~|~\calT[i] \in \calT_p, 1\leq i\leq n\}$ for $p\in\{1,2\}$, and $\calT'_3=\{\calT'[i]~|~n+1\leq i\leq n'\}$. We have $|\calT'_1|=|\calT'_2|=n/2\geq c(n/2c)=cn'$, and $|\calT'_3|/n'=(n/2c-n)/(n/2c)=1-2c \geq c$ as $c \leq 1/3$. Hence $\calP'$ is a $(cn)$-anonymous partition of $\calT'$. It is easy to verify that $cost(\calP')=cost(\calP)=OPT$, implying that $OPT'\leq OPT$.

On the other hand, let $\calP'=\{\calT'_1,\ldots,\calT'_r\}$ be a $(cn)$-anonymous partition of $\calT'$ with $cost(\calP')=OPT'$. For the simplicity of expression, we call $\calT'[i]$ an \emph{old} row if $1\leq i'\leq n$, and call it a \emph{new} row if $n+1\leq i\leq n'$. First assume that there exists $\calT'_p \in \calP'$ that contains both an old row and a new row. By our construction of $\calT'$, an old row and a new row differ in all the last $nm$ coordinates, and thus the cost for generalizing $\calT'_p$ is at least $2nm$. Since $OPT\leq nm$, we have $OPT' > OPT$ in this case, which cannot happen since we already proved $OPT' \leq OPT$. Therefore, for any $\calT'_p \in \calP'$, it either contains only old rows or contains only new rows. Assume w.l.o.g. that $\calT'_1,\ldots,\calT'_{r'}$ are the sub-tables in $\calP'$ that contain only old rows. Since all new rows are identical by our construction, we have
\begin{equation}\label{equ:nphard_part2}
OPT'=cost(\calP')=\sum_{p=1}^{r'}cost(\calT'_p).
\end{equation}
We now define a partition of $\calT$ as $\calP=\{\calT_1,\ldots,\calT_{r'}\}$, where $\calT_p=\{\calT[i]~|~\calT'[i] \in \calT'_p\}$ for all $1\leq p\leq r'$. Because $|\calT_p|=|\calT'_p|\geq cn'=c(n/2c)=n/2$, $\calP$ is an $(n/2)$-anonymous partition of $\calT$. As the last $nm+1$ columns are identical for all old rows of $\calT'$, we have $OPT\leq cost(\calP)=\sum_{p=1}^{r'}cost(\calT_p)=\sum_{p=1}^{r'}cost(\calT'_p)=OPT'$ by Equation (\ref{equ:nphard_part2}). Combined with that $OPT' \leq OPT$ obtained previously, we obtain that $OPT=OPT'$, and that an optimal $(cn)$-anonymous partition of $\calT'$ can be easily transferred to an optimal $(n/2)$-anonymous partition of $\calT$. This finishes the reduction from $(n/2)$-\textsc{Anonymity} to $(cn)$-\textsc{Anonymity}, and completes the proof of Theorem~\ref{thm:nphard_part2}.
\end{proof}

So far we have shown the hardness of $(cn)$-\textsc{Anonymity} for $c \in (0,1/3] \cup \{1/2\}$. For the remaining case $c\in(1/3,1/2)$ we need a different reduction.
\begin{theorem}\label{thm:nphard_part3}
For any constant $c$ such that $1/3<c<1/2$, $(cn)$-\textsc{Anonymity} is NP-hard.
\end{theorem}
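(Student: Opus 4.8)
The plan is to give, as in the previous two theorems, a polynomial-time reduction to $(cn)$-\textsc{Anonymity}, this time from \textsc{MinBisection} directly. The obstacle compared with Theorem~\ref{thm:nphard_part2} is that when $c<1/2$ we have $cn<n/2$, so an inert block of padding rows can never \emph{force} the two groups to be equal-sized; it only creates slack that permits unbalanced splits. I would instead make imbalance \emph{expensive} by appending, for every row, a private block of identical ``identifier'' columns; these contribute a term proportional to $s_1^2+s_2^2$ to the partition cost, which is minimized precisely at the balanced split. A convenient feature of the range $c\in(1/3,1/2)$ is that $3cn>n$, so any $(cn)$-anonymous partition has at most two groups, and we only ever reason about one or two groups.

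Concretely, starting from a \textsc{MinBisection} instance $G=(V,E)$ with $|V|=n$ even, I would build a table $\calT$ over the alphabet $\{0,1,\ldots,n\}$ whose first $m=|E|$ columns are exactly as in Theorem~\ref{thm:nphard_half} ($\calT[i][j]=i$ if $v_i\in e_j$, and $0$ otherwise), and then append $nK$ further columns grouped into $n$ blocks of $K$, where in the $i$-th block every column has value $i$ in row $v_i$ and $0$ in every other row; here $K$ is a large polynomial, say $K=nm$. For $n$ large enough, any group in a $(cn)$-anonymous partition has size $\ge cn\ge 2$, so for a two-group partition $\calP=\{\calT_1,\calT_2\}$ with corresponding vertex sets $V_1,V_2$ of sizes $s_1,s_2$, the first $m$ columns behave exactly as in Theorem~\ref{thm:nphard_half}, contributing $s_1d_1+s_2d_2$, where $d_p=|E_{pp}|+|E_{12}|$ is the number of edges incident to $V_p$; and inside $\calT_p$ precisely the $Ks_p$ identifier columns belonging to rows of $\calT_p$ get suppressed, contributing $Ks_p^2$. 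Hence $cost(\calP)=s_1d_1+s_2d_2+K(s_1^2+s_2^2)$, while the trivial one-group partition costs $n(m+nK)$.

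It then remains to argue that an optimal $(cn)$-anonymous partition is a balanced two-group partition encoding a minimum bisection. Since $0\le s_1d_1+s_2d_2\le nm$ and $s_1^2+s_2^2=\tfrac{n^2}{2}+2\bigl(s_1-\tfrac{n}{2}\bigr)^2$, every two-group partition whose parts differ in size pays at least $\tfrac{Kn^2}{2}+2K$, whereas every balanced two-group partition (feasible because $n/2\ge cn$) pays at most $nm+\tfrac{Kn^2}{2}$; choosing $K>nm/2$ makes the balanced partitions strictly cheaper, and one checks $nm+\tfrac{Kn^2}{2}<n(m+nK)$, so they also beat the one-group partition. For balanced partitions the identifier term is the constant $\tfrac{Kn^2}{2}$, and the cost equals $\tfrac{n}{2}(d_1+d_2)+\tfrac{Kn^2}{2}=\tfrac{n}{2}(|E|+|E_{12}|)+\tfrac{Kn^2}{2}$, exactly the quantity controlled in Theorem~\ref{thm:nphard_half}; so minimizing over balanced partitions amounts to minimizing $|E_{12}|$ over equal bisections of $G$. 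Consequently the optimum of $(cn)$-\textsc{Anonymity} on $\calT$ equals $\tfrac{n}{2}(|E|+OPT)+\tfrac{Kn^2}{2}$ with $OPT$ the minimum bisection size, and an optimal partition yields an optimal bisection, completing the reduction. The genuinely delicate step — the one I would write out most carefully — is this case analysis pinning the optimum down to a \emph{balanced, two-group} partition: bounding the edge-column cost uniformly by $nm$, using the quadratic gap $2K\bigl(s_1-\tfrac{n}{2}\bigr)^2$ to dominate every unbalanced competitor, and separately discarding the one-group partition.
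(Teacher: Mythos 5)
Your reduction is correct, but it is genuinely different from the one in the paper. The paper handles $c\in(1/3,1/2)$ by reducing from \textsc{HalfClique}: it pads the table with $n/2c-n$ mutually distinct \emph{rows}, so that the anonymity parameter $cn'$ equals $|V|/2$ exactly, one group must consist of $n/2$ old rows, and the cost works out to $n'm-|V_1|\cdot|E_{22}|$; a monotonicity argument on $f(x)=(n-x)\binom{x}{2}$ then pins $|V_2|$ to $n/2$ and forces $V_2$ to be a clique. You instead reduce from \textsc{MinBisection} directly and pad with \emph{columns}: the per-row identifier blocks contribute $K(s_1^2+s_2^2)=K\bigl(\tfrac{n^2}{2}+2(s_1-\tfrac{n}{2})^2\bigr)$, so for $K>nm/2$ the quadratic gap of at least $2K$ swamps the $\le nm$ edge-column cost and forces a balanced split, after which the cost reduces to the bisection objective $\tfrac{n}{2}(|E|+|E_{12}|)$ from Theorem~\ref{thm:nphard_half}. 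I checked the key computations: the identifier blocks of rows inside a group are suppressed and those outside are all-zero, giving exactly $Ks_p^2$ per group; the one-group partition costs $n(m+nK)$ and loses; and the constraint $1/3<c<1/2$ is used exactly where it should be (at most two groups since $3cn>n$, and the balanced split is feasible since $n/2\geq cn$). Note that your argument leans on ``at most two groups'' more heavily than the paper's does — a three-way split would actually \emph{decrease} your quadratic penalty to about $Kn^2/3$ — so the rounding convention for non-integral $cn$ must be $\lceil cn\rceil$ (group size at least $cn$), under which three groups are indeed impossible; this is the same level of integrality hand-waving the paper itself admits to in a footnote. Your approach buys a conceptually cleaner target problem (\textsc{MinBisection} rather than the auxiliary \textsc{HalfClique}) and a reusable ``quadratic balancing gadget,'' at the price of a larger (though still polynomial, $m+n^2m$-column) table; the paper's approach keeps the column count at $m$ but needs the ad hoc clique-counting function $f$ and its derivative analysis.
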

\begin{proof}
Fix $1/3<c<1/2$. We will present a polynomial reduction from the following problem to $(cn)$-\textsc{Anonymity}: given an undirected graph $G=(V,E)$, decide whether $G$ contains a clique (i.e., a subgraph in which every pair of vertices have an edge between them) that contains exactly $|V|/2$ vertices. Call this problem \textsc{HalfClique}. The NP-hardness of \textsc{HalfClique} easily follows from that of the well-known maximum clique problem, as can be seen as follows.
We reduce the classical \textsc{Clique} problem to \textsc{HalfClique}. Given a graph $G=(V,E)$ and an integer $k\leq |V|$, the \textsc{Clique} problem asks whether $G$ contains a clique with exactly $k$ vertices. This is a well-known NP-hard problem \cite{book_npc}. Now construct another graph $G'$ based on $G$ as follows: if $k\geq |V|/2$, then add $2k-|V|$ new isolated vertices to $V$; if $k<|V|/2$, then add $|V|-2k$ new vertices to $V$ and connecting them with each other as well as all original vertices in $V$. Let $V'$ be the new vertex set. It is easy to verify that $G$ has a clique of size $k$ if and only if $G'$ has a clique of size $|V'|/2$, which completes the reduction.

Let $G=(V,E)$ be an input graph of \textsc{HalfClique} with $|V|=n\geq 4$ and $|E|=m$. Assume $V=\{v_1,\ldots,v_n\}$ and $E=\{e_1,\ldots,e_m\}$.
We construct a table $\calT$ with $n'=n/2c$ rows and $m$ QI columns as follows. For $1\leq i\leq n$ and $1\leq j\leq m$, let $\calT[i][j]=i$ if $v_i \in e_j$, and $\calT[i][j]=0$ otherwise. For $n+1\leq i\leq n'$ and $1\leq j\leq m$, let $\calT[i][j]=i$. (Note that, in some sense, this construction can be seen as a combination of those used in the proof of Theorems~\ref{thm:nphard_half} and \ref{thm:nphard_part2}; however the analysis will be different and more intriguing.)

We first prove a result regarding the structure of an optimal $(cn)$-partition of $\calT$. Call $\calT[i]$ an old row if $1\leq i\leq n$, and a new row if $n+1\leq i\leq n'$. We assume that $\frac{n}{2c}-n\geq 2$, i.e., $\calT$ contains at least two new rows; this is without loss of generality because $c$ is a constant smaller than $1/2$.
Since $c>1/3$, any $(cn)$-anonymous partition contains at most two \groups. The trivial partition that consists of $\calT$ itself need to suppress every coordinate in the table, because a new row and an old row do not share common values. Therefore, the minimum cost $(cn)$-partition of $\calT$ contains exactly two \groups.

Denote by $OPT$ the minimum cost of a $(cn)$-anonymous partition of $\calT$. We claim that $G$ contains a clique of size $n/2$ if and only if $OPT\leq n'm-(n/2){n/2 \choose 2}$.
First consider the ``only if'' part. Assume $V_2 \subseteq V$ is a clique of size $n/2$, and let $V_1 = V \setminus V_2$. Then $|V_2|=|V_1|=n/2$. For $p,q\in\{1,2\}$, denote by $E_{pq}$ the set of edges with one endpoint in $V_p$ and another in $V_q$. We define a partition $\calP=\{\calT_1,\calT_2\}$ of $\calT$ by letting $\calT_1=\{\calT[i]~|~v_i \in V_1\}$ and $\calT_2 = \calT \setminus \calT'_1$. Since $|\calT_1|=n/2=c(n/2c)=cn'$ and $|\calT_2|/n'=(n/2c-n/2)/(n/2c)=1-c\geq c$, $\calP$ is a $(cn)$-anonymous partition. Similar to the proof of Theorem~\ref{thm:nphard_half}, we have $cost(\calT_1)=n(|E_{11}|+|E_{12}|)/2$ (see Equation (\ref{equ:onepart}) and its proof). Since $\calT_2$ contains both old and new rows, we have $cost(\calT_2)=|\calT_2|\cdot m=(n'-n/2)m$. Therefore,
\begin{eqnarray*}
OPT &\leq& cost(\calP)=cost(\calT_1)+cost(\calT_2)\\
&=&n(|E_{11}|+|E_{12}|)/2+(n'-n/2)m\\
&=&n(m-|E_{22}|)/2+(n'-n/2)m\\
&=&n'm-(n/2)|E_{22}|\\
&=&n'm-(n/2){n/2 \choose 2},
\end{eqnarray*}
where the last equality holds because $V_2$ is a clique of size $n/2$. This proves the ``only if'' part of the claim.

Next we consider the ``if'' direction.
Let $\calP=\{\calT_1,\calT_2\}$ be a $(cn)$-partition with $cost(\calP)=OPT \leq n'm-(n/2){n/2 \choose 2}$. As argued before, every sub-table that contains both old and new rows need to be suppressed totally. Thus, if both $\calT_1$ and $\calT_2$ contain both old and new rows, then $cost(\calP)=n'm$, which is worst possible. In this case we can change $\calT_1$ to be any set of $n/2$ old rows and let $\calT_2 = \calT \setminus \calT_1$ to obtain a $(cn)$-partition with no worse cost. Therefore, in what follows we assume w.l.o.g. that $\calT_1$ consists of only old rows.

Let $V_1=\{v_i~|~\calT[i]\in \calT_1\}$ and $V_2=V\setminus V_1$. Define $E_{pq}$ analogously as before for $p,q\in \{1,2\}$. Similar to the proof of Theorem~\ref{thm:nphard_half}, we have $cost(\calT_1)=|V_1|(|E_{11}|+|E_{12}|)$ (just replace $n/2$ with $|V_1|$ in Equation (\ref{equ:onepart})). Also $cost(\calT_2)=|\calT_2|\cdot m=(n'-|V_1|)m$ since $\calT_2$ contains both old and new rows. Hence, $cost(\calP)=|V_1|(|E_{11}|+|E_{22}|)+(n'-|V_1|)m=|V_1|(m-|E_{22}|)+(n'-|V_1|)m=n'm-|V_1|\cdot |E_{22}|$. On the other hand, $cost(\calP)=OPT \leq n'm-(n/2){n/2 \choose 2}$. Thus we have $|V_1|\cdot |E_{22}| \geq (n/2){n/2 \choose 2}$. As $|V_1|+|V_2|=n$ and $|E_{22}|\leq {|V_2| \choose 2}$, we obtain that
\begin{equation}\label{equ:temp}
(n-|V_2|){|V_2| \choose 2} \geq |V_1| \cdot |E_{22}| \geq \frac{n}{2}{n/2 \choose 2}.
\end{equation}

Because $|V_1|=|\calT_1|\geq cn'=n/2$, we have $|V_2|\leq n/2$. Define a fucntion $f: [0,n/2] \rightarrow \mathbb{R}$ as $f(x)=(n-x){x\choose 2}=(n-x)x(x-1)/2$ for all $0\leq x\leq n/2$. Then Equation (\ref{equ:temp}) indicates that $f(|V_2|) \geq f(n/2)$. Since $f(0)=0$, $|V_2|\geq 1$ holds. Let $f'$ be the derivative of $f$ with respect to $x$. It is easy to verify that $f'(x)=\frac{1}{2}(-3x^2+2(n+1)x-n)$. The minimum value of $f'(x)$ when $1\leq x\leq n/2$ can only be obtained at $x \in \{1,n/2,(n+1)/3\}$. Simple calculations show that $f'(1),f'(n/2)$, and $f'((n+1)/3)$ are all positive. Hence $f'(x)>0$ for all $1\leq x\leq n/2$, which means that $f(x)$ is strictly monotone increasing on $[1,n/2]$. Since we know that $f(|V_2|) \geq f(n/2)$ and that $1\leq |V_2| \leq n/2$, it must hold that $|V_2|=n/2$. Therefore (\ref{equ:temp}) holds with two equalities. We thus have $|E_{22}|={n/2\choose 2}$, implying that $V_2$ is a clique of size $n/2$.

We have shown that $G$ has a clique of size $n/2$ if and only if $\calT$ has a $(cn)$-anonymous partition of cost at most $n'm-(n/2){n\choose 2}$. This completes the reduction from \textsc{HalfClique} to $(cn)$-\textsc{Anonymity}, from which Theorem~\ref{thm:nphard_part3} follows.
\end{proof}

Now Theorem~\ref{thm:nphard} follows straightforward from Theorems~\ref{thm:nphard_half}, \ref{thm:nphard_part2} and \ref{thm:nphard_part3}. Interestingly, the three reductions work for disjoint ranges of $c$, which altogether give the desired result.
By Lemma~\ref{lem:reduction} we obtain:
\begin{corollary}\label{cor:closeness_atleasthalf}
For any constant $t$ such that $1/2 \leq t < 1$, $t$-\textsc{Closeness} is NP-hard even with equal-distance space.
\end{corollary}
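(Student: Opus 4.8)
The plan is simply to compose Theorem~\ref{thm:nphard} with the reduction of Lemma~\ref{lem:reduction}. Fix a constant $t$ with $1/2\le t<1$ and set $c=1-t$, so that $0<c\le 1/2$. By Theorem~\ref{thm:nphard}, $(cn)$-\textsc{Anonymity} is NP-hard. It therefore suffices to reduce $(cn)$-\textsc{Anonymity} to \tclose with equal-distance space and threshold $t$.

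Given an instance $\calT$ of $(cn)$-\textsc{Anonymity} with $n$ rows, I would apply the transformation from the proof of Lemma~\ref{lem:reduction} verbatim: relabel the SA values so that they are pairwise distinct (legitimate since SA values are irrelevant to \kanony), keep the same QI columns, equip $\Sigma_s$ with the equal-distance metric, and set the closeness threshold to $1-k/n$ with $k=cn$. By the computation in that proof, a \group of size $r$ has EMD exactly $1-r/n$ from $\prob(\calT)$, so a \group satisfies the closeness constraint if and only if it has size at least $k=cn$; hence the $(cn)$-anonymous partitions of $\calT$ coincide with the $t$-closeness partitions of the relabeled table, and the costs are identical. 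Thus an optimal $t$-closeness partition yields an optimal $(cn)$-anonymous partition, and the whole transformation is clearly polynomial time, giving the claimed NP-hardness.

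The only point requiring a moment's care is that the threshold emitted by Lemma~\ref{lem:reduction} is $1-k/n$, which in general depends on $n$; but since we feed in instances of $(cn)$-\textsc{Anonymity} rather than $k$-\textsc{Anonymity} for a fixed integer $k$, the ratio $k/n$ equals the constant $c$ for every instance, so the threshold is the fixed constant $1-c=t$ throughout, exactly as the statement demands. As $c$ ranges over $(0,1/2]$ the quantity $t=1-c$ ranges over $[1/2,1)$, so this single argument covers the entire stated interval. There is no genuinely hard step here: the corollary is just the terminal link in the chain $\textsc{MinBisection}/\textsc{HalfClique}\ \to\ (cn)\textsc{-Anonymity}\ \to\ \tclose$ built up in Theorems~\ref{thm:nphard_half}--\ref{thm:nphard_part3} and Lemma~\ref{lem:reduction}.
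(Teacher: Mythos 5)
Your proposal is correct and is exactly the paper's argument: the corollary is obtained by composing Theorem~\ref{thm:nphard} with Lemma~\ref{lem:reduction}, taking $c=1-t$ so that the emitted threshold $1-k/n=1-c=t$ is the desired constant. Your extra remark that the threshold is instance-independent precisely because $k/n$ is fixed at $c$ is the right (and only) point of care.
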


We next show the hardness of $t$-\textsc{Closeness} for $0\leq t<1/2$ by two different reductions from the 3-dimensional matching problem, each of which covers a different range of $t$.

\begin{theorem}\label{thm:closeness_less13}
For any constant $t$ such that $0\leq t<1/3$, $t$-\textsc{Closeness} is NP-hard even if $|\Sigma_s|=3$.
\end{theorem}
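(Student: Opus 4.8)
The plan is to reduce from the NP-hard \textsc{3-Dimensional Matching} problem (\textsc{3DM}): given pairwise disjoint sets $X,Y,Z$ with $|X|=|Y|=|Z|=q$ and a family $T\subseteq X\times Y\times Z$ with $|T|=p$, decide whether $T$ contains a \emph{perfect matching}, that is, $q$ triples that together cover every element exactly once. From such an instance I would construct, in polynomial time, a \tclose instance over the equal-distance space with $\Sigma_s=\{1,2,3\}$ (so that, by the Fact, $\emd$ is the total variation distance), designed so that the overall \vect of the table is $(1/3,1/3,1/3)$ and the minimum cost of a $t$-closeness partition equals a fixed number $n(p-1)$ precisely when $T$ admits a perfect matching.

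The construction: the table $\calT$ has one row $r_e$ for each $e\in X\cup Y\cup Z$, so $n=3q$, and the SA value of $r_e$ is $1,2$, or $3$ according to whether $e$ lies in $X,Y$, or $Z$; thus exactly $q$ rows carry each SA value and $\prob(\calT)=(1/3,1/3,1/3)$. There is one QI column $c_\tau$ for each $\tau\in T$, so $m=p$. In column $c_\tau$, writing $\tau=(x,y,z)$, the three rows $r_x,r_y,r_z$ all receive a single common symbol, whereas every other row $r_e$ receives a private symbol of its own; triple-symbols and element-symbols are drawn from disjoint alphabets. The crucial consequence is that two or more rows can agree in column $c_\tau$ only if all of them belong to $\{r_x,r_y,r_z\}$.

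For correctness, fix any constant $t$ with $0\le t<1/3$ and note two facts. First, a group $G$ with $|G|\le 2$ can never have $t$-closeness, since its \vect is at total variation distance at least $1/3>t$ from $(1/3,1/3,1/3)$, while a group of size exactly $3$ has $t$-closeness if and only if it is \emph{balanced}, i.e.\ contains one row of each SA value; and any balanced group has \vect equal to $\prob(\calT)$, hence EMD $0$, so it is $t$-close for every $t\ge 0$ (this is what lets the argument survive the extreme case $t=0$). In particular every group of a $t$-closeness partition has size at least $3$. Second, for any group $G$ with $|G|\ge 2$ at most one QI column is left unsuppressed, and exactly one is unsuppressed precisely when $G=\{r_x,r_y,r_z\}$ for some $(x,y,z)\in T$: agreement of two or more rows in $c_\tau$ forces $G\subseteq\{r_x,r_y,r_z\}$, and a balanced triple $\{r_x,r_y,r_z\}$ with $(x,y,z)\notin T$ simply has no column of its own. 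Writing $u(G)\in\{0,1\}$ for this count, we have $cost(G)=|G|\,(p-u(G))$, so for any $t$-closeness partition $\calP$,
\[
cost(\calP)=\sum_{G\in\calP}|G|\,(p-u(G))=np-3\bigl|\{\,G\in\calP:\ G\text{ is the row set of a triple of }T\,\}\bigr|.
\]
Because the triple-groups in $\calP$ are pairwise disjoint, $\calP$ can contain $q$ of them only if it \emph{is} a perfect matching; hence $cost(\calP)\ge np-3q=n(p-1)$, with equality iff $\calP$ realizes a perfect matching and $cost(\calP)\ge n(p-1)+3$ otherwise. Thus $\calT$ has a $t$-closeness partition of cost at most $n(p-1)$ iff $T$ has a perfect matching, which establishes the reduction.

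The step I expect to require the most care is the second structural fact together with the cost accounting, in particular ruling out that large groups are ever useful: once $t\ge 1/6$ the instance does admit $t$-close groups of size $4$ or more (for instance with SA-value counts $(2,1,1)$), so one cannot simply argue that all groups must be triples. The point that makes everything work — and that makes a single construction cover the whole interval $[0,1/3)$ — is that the ``at most one unsuppressed column'' bound holds for \emph{every} group of size at least $2$, independent of its \vect, so enlarging a balanced triple into any larger valid group can only decrease $\sum_G|G|u(G)$ and hence only increase the total cost. (One should also check, trivially, that the equal-distance metric on three points is a legitimate normalized \saspace.)
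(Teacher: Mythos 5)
Your proposal is correct and is essentially the paper's own reduction: one row per element of $X\cup Y\cup Z$ with SA value recording membership in $X$, $Y$, or $Z$, one QI column per triple that is uniform exactly on that triple's three rows, the equal-distance metric so that $\prob(\calT)=(1/3,1/3,1/3)$, exclusion of groups of size at most $2$ via the $1/3>t$ bound, and the cost threshold $n(p-1)$ attained precisely by perfect matchings. The only nitpick is that your claim ``at most one unsuppressed column for every group of size $\ge 2$'' can fail for size-$2$ groups when two elements co-occur in two distinct triples, but this is harmless since such groups are already forbidden by the closeness constraint.
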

\begin{proof}
Fix $0\leq t<1/2$. We perform a polynomial-time reduction from the 3-dimensional matching problem (\textsc{3D-Matching}) to $t$-\textsc{Closeness}. The input of \textsc{3D-Matching} consists of three equal-sized pairwise-disjoint sets $X,Y$, and $Z$, together with a collection $S$ of 3-tuples from $X \times Y \times Z$. The goal is to decide whether there exists a set of $|X|$ tuples from $S$ that covers each element of $X \cup Y \cup Z$ exactly once. This problem is well known to be NP-hard \cite{book_npc}.

Consider an instance of \textsc{3D-Matching}. Assume $|X|=|Y|=|Z|=n$, $U=X \cup Y \cup Z=\{v_1,v_2,\ldots,v_{3n}$, and the set of tuples is $S=\{e_1,e_2,\ldots,e_m\}$. Each tuple in $S$ is regarded as a subset of $U$ of size 3. The reduction that we will use is similar to that in \cite{edbt10_ldiversity}. We construct an instance of $t$-\textsc{Closeness} as follows. The table $\calT$ has $3n$ rows and $m$ QI columns as well as an SA column. For every $1\leq i\leq 3n$ and $1\leq j\leq m$, let $\calT[i][j]=i$ if $v_i \not\in e_j$ and $\calT[i][j]=0$ if $v_i \in e_j$. Let $\calT[i][m+1]$ be 1, 2, or 3, if $v_i$ belongs to $X$, $Y$, or $Z$, respectively. The \saspace is the equal-distance space. Notice that each QI column of $\calT$ contains exactly three zeros, corresponding to the three elements in the tuple associated with this column. Also note that the \vect of $\calT$ is $\prob(\calT)=(1/3,1/3,1/3)$ since $|X|=|Y|=|Z|$.

We will prove that, there exists $n$ tuples of $S$ whose union equals $U=X \cup Y \cup Z$ if and only if $\calT$ has a $t$-closeness partition of cost at most $3n(m-1)$. This will complete the reduction from \textsc{3D-Matching} to $t$-\textsc{Closeness}.

First consider the ``only of'' direction. Assume that there exists $S' \subseteq S$, $|S'|=n$, such that $\bigcup_{e\in S'}e = U$. We assume w.l.o.g. that $S'=\{e_1,e_2,\ldots,e_n\}$. Define a partition $\calP$ of $\calT$ as follows: $\calP=\{\calT_1,\ldots,\calT_n\}$, where $\calT_p=\{\calT[i]~|~v_i \in e_p\}$ for all $1\leq p\leq n$. Clearly $|\calT_1|=\ldots=|\calT_n|=3$. Since each $e_p$ contains exactly one element from each of $X$, $Y$ and $Z$, we have $\prob(\calT_p)=(1/3,1/3,1/3)=\prob(\calT)$. Hence $\calP$ is a $t$-closeness (and in fact 0-closeness) partition of $\calT$. By our construction, for each $p\in [n]$, the $p$-th column of $\calT_p$ consists of three zeros, and every other column contains at least two different QI values. Thus $cost(\calT_p)=3(m-1)$, and $cost(\calP)=\sum_{p=1}^{n}cost(\calT_p)=3n(m-1)$. The ``only if'' direction is proved.

We next consider the ``if'' direction. Let $\calP=\{\calT_1,\ldots,\calT_r\}$ be a $t$-closeness partition of $\calT$ with cost at most $3n(m-1)$. We claim that $|\calT_p|\geq 3$ for all $p\in [r]$. Assume to the contrary that $|\calT_p| \leq 2$ for some $p$. Then $\prob(\calT_p)$ is either $(0,1/2,1/2)$ or $(0,0,1)$ up to permutations of the coordinates. It is easy to verify that $\emd(\prob(\calT_p),\prob(\calT))\geq 1/3>c$ in both cases, which contradicts the fact that $\calP$ is a $t$-closeness partition. Hence, $|\calT_p| \geq 3$. If $|\calT_p| \geq 4$, then $cost(\calT_p)=|\calT_p|\cdot m$, because each column of $\calT$ consists of three zeros and $3n-3$ distinct non-zero values and thus needs to be suppressed entirely in $\calT_p$. If $|\calT_p|=3$, then $cost(\calT_p)=3(m-1)$ if there is a tuple in $S$ that contains the three elements associated with the vectors in $\calT_p$ (in which case the column corresponding to this tuple needs not be suppressed), and $cost(\calT_p)=3m$ otherwise. Since $cost(\calP)=3n(m-1)$, every \group $\calT_p$ is of size 3 and induces a tuple, say $e_p \in S$. Then $\{e_1,\ldots,e_p\}$ is a set of $n$ tuples whose union equals $U$, proving the ``if'' direction.
This completes the reduction from \textsc{3D-Matching} to $t$-\textsc{Closeness}, and  Theorem~\ref{thm:closeness_less13} follows.
\end{proof}

Finally we come to the last part $t \in [1/3,1/2)$.
\begin{theorem}\label{thm:closeness_last}
For any constant $t$ such that $1/3\leq t<1/2$, $t$-\textsc{Closeness} is NP-hard even if $|\Sigma_s|=4$.
\end{theorem}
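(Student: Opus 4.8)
I would reduce from \textsc{3D-Matching}, building on the construction used for Theorem~\ref{thm:closeness_less13} but introducing the fourth sensitive value to compensate for the fact that, once $t\ge 1/3$, the ``a group of two element rows is automatically infeasible'' argument of that proof no longer applies (a pair that is one element short of a tuple has $\emd$ only about $1/3$ from a balanced distribution). Starting from a \textsc{3D-Matching} instance $(X,Y,Z,S)$ with $|X|=|Y|=|Z|=n$ and $S=\{e_1,\dots,e_m\}$, I would first form the same element rows $v_1,\dots,v_{3n}$ and $m$ QI columns as there (column $j$ carries $0$ exactly on the three members of $e_j$ and a fresh private symbol on every other element row), and give $v_i$ the SA value $1,2,3$ according to whether $v_i\in X,Y,Z$. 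Then I would append $d$ ``padding rows'' with the new SA value $4$ whose QI coordinates are all $0$ (so a padding row never forces a suppression), and choose $d=\beta n$ so that $\prob(\calT)=(\tfrac1{3+\beta},\tfrac1{3+\beta},\tfrac1{3+\beta},\tfrac\beta{3+\beta})$ for a positive integer $\beta$ picked as a function of $t$, which one checks is possible for every $1/3\le t<1/2$. The intent is that a group formed by the three members of a tuple together with the right number of padding rows has $\emd$ equal to $0$, the canonical partition into $n$ such groups has cost exactly $C:=(3n+d)(m-1)$, and a perfect matching yields precisely this partition.

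The core of the argument would be a classification, via the equal-distance (total-variation) formula, of which groups can satisfy $t$-closeness under $\prob(\calT)$. I expect to show: a group consisting only of padding rows, or containing exactly one element row, has $\emd>1/2>t$ and is infeasible (this is where $t<1/2$ enters); a group whose element rows do not all lie in one tuple leaves no QI column unsuppressed, hence contributes $0$ to the saving in the identity $cost(\calP)=(3n+d)m-\sum_{G\in\calP}|G|\,u(G)$, where $u(G)$ is the number of unsuppressed columns of $G$; and, after a harmless preprocessing of the instance ensuring no two tuples share two elements (so that $u(G)\le 1$ for every feasible group), every feasible ``saving'' group is either a \emph{tuple-group} (the three members of a tuple plus a bounded number of padding rows) or a \emph{pair-group} (two element rows of distinct types lying in a common tuple, plus a bounded positive number of padding rows). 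Combining these, $\sum_{G}|G|\,u(G)\le 3n+d$, with equality exactly when $\calP$ consists of tuple-groups and pair-groups covering every element row and every padding row; so a $t$-close partition of cost $\le C$ exists iff such an ``all-$u(G)=1$'' partition exists.

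The remaining, and hardest, step is to eliminate the pair-groups: one must show that whenever the $3n$ elements can be packed into pairs-inside-tuples and whole tuples subject to the padding-row budget (which caps the number of pair-groups), the instance already has a perfect matching. This is where I expect the real work to lie, and where both the fourth SA value — needed to create the padding budget that limits pair-groups — and the hypothesis $t\ge 1/3$ — which is precisely what forces us to tolerate potentially feasible pair-groups, unlike in Theorem~\ref{thm:closeness_less13} — are used. Making the counting tight is the main obstacle; it likely requires refining the construction (for instance giving element rows of the three types different multiplicities, or using a suitably restricted variant of \textsc{3D-Matching}) so that pair-groups are $\emd$-admissible only in a regime the padding budget rules out. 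The usual integrality caveat for ``$d=\beta n$ is an integer'' and the WLOG preprocessing are handled exactly as in the preceding proofs.
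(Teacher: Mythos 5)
Your reduction correctly identifies why the argument of Theorem~\ref{thm:closeness_less13} breaks down at $t\ge 1/3$ and why a fourth SA value is needed, but the specific construction you choose creates an obstacle that you acknowledge you cannot overcome, and the obstacle is real: with the equal-distance metric and all-zero padding rows, pair-groups genuinely are $t$-close and genuinely save suppressions, so the identity $cost(\calP)=(3n+d)m-\sum_G|G|\,u(G)$ does not force a perfect matching. Concretely, with $\beta=1$ the overall distribution is $(1/4,1/4,1/4,1/4)$, and a group of two element rows of distinct types plus one padding row has distribution $(1/3,1/3,0,1/3)$, at total-variation distance $1/4<1/3\le t$; it leaves the column of a common tuple unsuppressed and achieves the same per-row saving as a tuple-group. (For larger $\beta$ or larger $t$ even a single element row plus padding rows can become feasible, e.g.\ $(1/5,0,0,4/5)$ versus $(1/5,1/5,1/5,2/5)$ has distance $2/5<t$ for $t\ge 2/5$, so parts of your claimed classification also fail.) No choice of integer $\beta$ repairs this within your framework; the ``eliminate the pair-groups'' step you defer is precisely the unproved core.

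The paper's proof avoids the issue with two ideas absent from your proposal. First, it does \emph{not} use the equal-distance space on $\Sigma_s=\{1,2,3,4\}$: it sets $d(i,j)=1$ for distinct $i,j\in\{1,2,3\}$ but $d(4,i)=1/2$, still a metric. Second, the extra rows (there are $n'-3n$ of them, with $n'=3n/(1-2t)$) carry \emph{pairwise distinct fresh} QI values rather than all-zero ones, so any group mixing an extra row with any other row is fully suppressed, and the intended partition places each extra row in a singleton group. Under this metric, $\prob(\calT)=\bigl(\tfrac{1-2t}{3},\tfrac{1-2t}{3},\tfrac{1-2t}{3},2t\bigr)$, a tuple-group with distribution $(1/3,1/3,1/3,0)$ has EMD exactly $\tfrac12\cdot 2t=t$ (all the mass at value $4$ moves at ground distance $1/2$), a singleton extra row has EMD $\tfrac{1-2t}{2}\le t$, while any group of at most two element rows has EMD at least $t+\tfrac{1-2t}{3}>t$ and is infeasible. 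This restores exactly the ``every feasible old-row group has size at least $3$'' dichotomy of Theorem~\ref{thm:closeness_less13}, the cost accounting from that proof applies verbatim, and no pair-groups ever need to be ruled out.
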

\begin{proof}
Fix $1/3\leq t<1/2$. We give a reduction from \textsc{3D-Matching} to $t$-\textsc{Closeness} similar to that used in the the proof of Theorem~\ref{thm:closeness_less13}, with some more ingredients. Consider an instance of \textsc{3D-Matching}. The element set is $U=X \cup Y \cup Z=\{v_1,v_2,\ldots,v_{3n}\}$ where $|X|=|Y|=|Z|=n$. The tuple set is $S=\{e_1,\ldots,e_m\}$ where each $e_i$, $1\leq i\leq m$, is a subset of $U$ of size 3 that consists of exactly one element from each of $X$, $Y$, and $Z$. The goal is to decide whether there exists $S' \subseteq S$, $|S'|=n$, such that $\bigcup_{e\in S'}e = U$.

We set up an instance of $t$-\textsc{Closeness} as follows. The table $\calT$ consists of $n'=3n/(1-2t)$ rows, $m$ QI columns, and an SA column. For all $1\leq i\leq 3n$ and $1\leq j\leq m$, $\calT[i][j]=i$ if $v_i \not\in e_j$ and $\calT[i][j]=0$ if $v_i\in e_j$. For $1\leq i\leq 3n$, $\calT[i][m+1]=1$ if $v_i\in X$, $\calT[i][m+1]=2$ if $v_i \in Y$, and $\calT[i][m+1]=3$ if $v_i \in Z$. For $3n+1\leq i\leq n'$, $\calT[i][j]=i$ for $1\leq j\leq m$, and $\calT[i][m+1]=4$. Note that $\Sigma_s=\{1,2,3,4\}$. Define the distance function of the \saspace as $d(1,2)=d(1,3)=d(2,3)=1$ and $d(4,1)=d(4,2)=d(4,3)=1/2$; this clearly forms a metric on $\Sigma_s$. It is easy to verify that $\prob(\calT)=(\frac{1-2t}{3},\frac{1-2t}{3},\frac{1-2t}{3},2t)$. The goal is to decide whether $\calT$ has a $t$-closeness partition.
Before showing the correctness of the reduction, we present a formula for computing the EMD between two distributions under this metric. Let $\textbf{A}=(a_1,a_2,a_3,a_4)$ and $\textbf{B}=(b_1,b_2,b_3,b_4)$ be two \vects with $a_4\geq b_4$. Then,
\begin{equation}\label{equ:special_metric}
\emd(\textbf{A},\textbf{B})=\frac{1}{2}(a_4-b_4)+\sum_{i\in\{1,2,3\}:a_i\geq b_i}(a_i-b_i).
\end{equation}
This can be seen as follows. Let $S_{\geq}=\{1\leq i\leq 4~|~a_i\geq b_i\}$ and $S_{<}=\{1,2,3,4\}\setminus S_{\geq}$. We have $4\in S_{\geq}$ and $\sum_{i\in S_{\geq}}(a_i-b_i)=\sum_{j\in S_{<}}(b_j-a_j)$.
To transform $\mathbf{A}$ to $\mathbf{B}$, we need to move $M=\sum_{i\in S_{\ge}}(a_i-b_i)$ amount of mass from $S_{\ge}$ to $S_<$. $a_4-b_4$ amount of mass at point 4 can be moved out by distance $1/2$, while the remaining amount must be moved by distance 1. Therefore $\emd(\mathbf{A},\mathbf{B})= \frac{1}{2}(a_4-b_4)+\sum_{i \in S_{\ge}\setminus \{4\}}(a_i-b_i)=\frac{1}{2}(a_4-b_4)+\sum_{i\in\{1,2,3\}:a_i\geq b_i}(a_i-b_i)$.

We prove that the answer to the matching instance is yes if and only if $\calT$ has a $t$-closeness partition of cost at most $3n(m-1)$. First consider the ``only if'' direction. Assume w.l.o.g. that $S'=\{e_1,\ldots,e_n\}$ satisfies $\bigcup_{e\in S'}e=U$. Define a partition $\calP=\{\calT_1,\calT_2,\ldots,\calT_{n}\} \cup \{\calT'_{3n+1},\calT'_{3n+2},\ldots,\calT'_{n'}\}$, where $\calT_p=\{\calT[i]~|~i\in e_p\}$ for $1\leq p\leq n$ and $\calT'_{p}=\{\calT[p]\}$ for $3n+1\leq p\leq n'$, i.e., each $\calT'_p$ consists of a single row. By similar arguments as in the proof of Theorem~\ref{thm:closeness_less13}, $cost(\calT_{p})=3(m-1)$ for $1\leq p\leq n$, and obviously $cost(\calT'_{p})=0$ for $3n+1\leq p\leq n'$. Hence $cost(\calP)=3n(m-1)$. It remains to show that $\calP$ is a $t$-closeness partition. For $1\leq p\leq n$, $\prob(\calT_p)=(1/3,1/3,1/3,0)$, by Equation~(\ref{equ:special_metric}) we have $\emd(\prob(\calT_p),\prob(\calT))=\frac{1}{2}\cdot 2t=t$. Since $\prob(\calT'_{q})=(0,0,0,1)$ for all $3n+1\leq q\leq n'$, $\emd(\prob(\calT'_{q}),\prob(\calT))=\frac{1}{2}(1-2t)\leq t$ as $t\geq 1/3$ (actually this holds for all $t\geq 1/4$). This proves that $\calP$ is a $t$-closeness partition, and hence the ``only if'' direction.

Now consider the ``if'' direction. Let $\calP=\{\calT_1,\ldots,\calT_r\}$ be a $t$-closeness partition of $\calT$ with cost at most $3n(m-1)$. Call $\calT[i]$ an old row if $1\leq i\leq 3n$, and a new row if $i>3n$. By our construction of $\calT$, it is clear that $cost(\calT_p)=|\calT_p|\cdot m$ if $|\calT_p|\geq 2$ and $\calT_p$ contains at least one new row. Now let $\calT_p$ be a \group containing only old rows. If $|\calT_p|\leq 2$, then $\prob(\calT_p))$ is equivalent to $(1,0,0,0)$ or $(1/2,1/2,0,0)$ up to permutations of the first three coordinates. By (\ref{equ:special_metric}) and the fact that $t<1/2$, we can verify that $\emd(\prob(\calT_p),\prob(\calT))>t$ in both cases. Therefore $|\calT_p|\geq 3$. Analogous to the proof of Theorem~\ref{thm:closeness_less13}, we know that $cost(\calT_p)=3(m-1)$ if $\calT_p$ consists of three old rows corresponding to three elements in the same tuple, and $cost(\calT_p)=|\calT_p|\cdot m$ otherwise. Thus for $cost(\calP)=3n(m-1)$ it must be the case that there exist $n$ \groups each of which consists of three old rows, and each of the remaining \groups consists of exact one new row. As \groups are disjoint, they together cover all the $3n$ old rows, which naturally induces $n$ tuples of $S$ whose union equals $U$. The ``if'' direction is thus proved.
This completes the reduction from \textsc{3D-Matching} to $t$-\textsc{Closeness}, and Theorem~\ref{thm:closeness_last} follows.
\end{proof}

\section{Exact and Fixed-Parameter Algorithms}\label{sec:exact}
In this section we design exact algorithms for solving $t$-\textsc{Closeness}. Notice that the size of an instance of $t$-\textsc{Closeness} is polynomial in $n$ and $m+1$. The brute-force approach that examines each possible partition of the table to find the optimal solution takes $n^{O(n)}m^{O(1)}=2^{O(n\log n)}m^{O(1)}$ time. We first improve this bound to single exponential in $n$. (Note that it cannot be improved to polynomial unless $\mathrm{P}=\mathrm{NP}$.)
\begin{theorem}
The $t$-\textsc{Closeness} problem can be solved in $2^{O(n)}\cdot O(m)$ time.
\end{theorem}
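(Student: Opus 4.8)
The plan is to reduce the problem to computing a minimum-weight set partition (exact cover / set-partitioning) over the $n$ rows, and then solve that via the standard subset-sum-style dynamic programming over subsets of $[n]$. Concretely, I would proceed as follows. First, observe that the overall SA distribution $\prob(\calT)$ depends only on the multiset of SA values in $\calT$, so it can be computed once in $O(n)$ time; this never changes during the algorithm. Second, for each nonempty subset $A\subseteq[n]$ of rows, I want to decide whether the group $\calT_A=\{\calT[i]\mid i\in A\}$ has $t$-closeness with $\calT$, and if so compute $cost(\calT_A)$. Checking $t$-closeness requires computing $\emd(\prob(\calT_A),\prob(\calT))$; since $|\Sigma_s|\le n$, this is a transportation LP of size polynomial in $n$, solvable in time polynomial in $n$ (independent of $m$). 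The cost $cost(\calT_A)=|A|\cdot cost(t^*)$ where $cost(t^*)$ is the number of coordinates $j\in[m]$ on which the rows of $A$ disagree; this is computed in $O(|A|\cdot m)\le O(nm)$ time by scanning the columns. So building the table of ``feasible-group costs'' $g(A)$ (with $g(A)=+\infty$ if $\calT_A$ is infeasible) takes $2^{O(n)}\cdot O(m)$ time total.

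Third, define $D(S)$ for $S\subseteq[n]$ to be the minimum cost of a $t$-closeness partition of the sub-multiset $\calT_S$ — wait, one subtlety: $t$-closeness of a group is defined relative to the \emph{whole} table $\calT$, not relative to $\calT_S$, so $g(A)$ must always be computed against the fixed $\prob(\calT)$, and then $D(S)=\min_{\emptyset\ne A\subseteq S}\big(g(A)+D(S\setminus A)\big)$ with $D(\emptyset)=0$. The answer is $D([n])$. This recurrence is the classic set-partitioning DP; evaluating it naively by iterating over all subset pairs $(A,S\setminus A)$ costs $\sum_{S}2^{|S|}=3^n$ additions, i.e.\ $2^{O(n)}$, with each step a single addition/comparison. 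So the DP phase runs in $2^{O(n)}$ time and $2^{O(n)}$ space (storing $g$ and $D$). Combining the two phases gives the claimed $2^{O(n)}\cdot O(m)$ total running time. Correctness of the recurrence is immediate: any partition of $\calT$ induces, for any fixed ``first block'' $A$ containing row $1$, a partition of $\calT_{[n]\setminus A}$, and conversely; restricting the minimization over $A$ to those containing a fixed element (say $\min S$) avoids recomputation without losing optimality.

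The main obstacle — really the only non-routine point — is making sure the running-time accounting is honest: there are $2^n$ subsets, and for each we spend $O(nm)$ time to compute its suppression cost plus $\mathrm{poly}(n)$ time to solve the EMD LP, giving $2^n\cdot(O(nm)+\mathrm{poly}(n))$, which is $2^{O(n)}\cdot O(m)$ since the $n$-dependent factors are absorbed into $2^{O(n)}$. One should note that the EMD linear program has a number of variables and constraints bounded by a polynomial in $|\Sigma_s|\le n$, and can be solved exactly in time polynomial in $n$ (e.g.\ it is a transportation problem solvable combinatorially, or one may invoke a polynomial-time LP solver, since all data are rationals with numerators and denominators at most $n$); thus the EMD computations contribute only a $2^{O(n)}$ factor, not anything depending on $m$. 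Finally, the $3^n$ bound for the DP can be reproved cleanly by noting $\sum_{S\subseteq[n]}2^{|S|}=\sum_{k=0}^n\binom{n}{k}2^k=3^n$; no faster subset-convolution machinery is needed for the stated bound.
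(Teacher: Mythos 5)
Your proof is correct, but it takes a genuinely different route from the paper's. The paper does not build a table over all $2^n$ subsets; instead it proves a structural lemma about an (unknown) optimal partition --- either some group has size exceeding $n/2$, or the groups can be split into two bundles each covering between $n/4$ and $3n/4$ of the rows --- and then runs a divide-and-conquer recursion that enumerates the corresponding balanced (or large) subset $\calT_1$ and recurses on $\calT_1$ and $\calT\setminus\calT_1$, yielding the recurrence $f(s)\le 2^{s+2}f(3s/4)+O(2^s)$ and hence $f(n)=2^{O(n)}\cdot O(m)$. Your approach is the classical set-partitioning dynamic program: precompute, for every subset $A\subseteq[n]$, the feasibility (via $\emd(\prob(\calT_A),\prob(\calT))\le t$, always measured against the fixed whole-table distribution --- you correctly flag this subtlety) and the cost $g(A)$, then evaluate $D(S)=\min_{A\ni\min S}\bigl(g(A)+D(S\setminus A)\bigr)$ in $\sum_{S}2^{|S|}=3^n$ steps. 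Both arguments are sound and both give $2^{O(n)}\cdot O(m)$; your version is easier to verify and has a smaller constant in the exponent ($3^n\approx 2^{1.59n}$ for the DP phase versus roughly $2^{4n}$ from unrolling the paper's recurrence, with the $m$-dependent work confined to the $2^n\cdot O(nm)$ precomputation), while the paper's recursion can be run without memoization and therefore in polynomial space, whereas your DP inherently stores $2^{\Theta(n)}$ table entries. Your running-time accounting (absorbing all $\mathrm{poly}(n)$ factors, including the transportation LP for the EMD, into $2^{O(n)}$) is honest and matches the claimed bound.
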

\begin{proof}
Consider an input table $\calT$ of the $t$-\textsc{Closeness} problem. Assume that $\calP=\{\calT_1,\ldots,\calT_r\}$ is an optimal $t$-closeness partition of $\calT$ (note that we do not know $\calP$; it is only used for analysis).
Obviously there is at most one \group $\calT_p$ with $|\calT_p|>n/2$. We claim that, if $|\calT_p|\leq n/2$ for all $p\in [r]$, then there is a disjoint partition $(A_1,A_2)$ of $\{1,2,\ldots,r\}$ such that $n/4 \leq |\bigcup_{p\in A_i}\calT_p|\leq 3n/4$ for any $i\in\{1,2\}$. This can be seen as follows. Denote by $n(A)=|\bigcup_{p\in A}\calT_p|$ for any $A\subseteq [r]$. Let $(A_1,A_2)$ be the partition of $[r]$ that minimizes $|n(A_1)-n(A_2)|$. Assume w.l.o.g. that $n(A_1) \leq n(A_2)$. If $n(A_2) \leq 3n/4$, the claim is proved. Otherwise, $A_2$ contains at least two \groups, and we move an arbitrary \group from $A_2$ to $A_1$ resulting in a new partition $(A'_1,A'_2)$.
If $n(A'_1)\leq n(A'_2)$, then $|n(A'_1)-n(A'_2)|=n(A'_2)-n(A'_1)<n(A_2)-n(A_1)=|n(A_1)-n(A_2)|$, which contradicts the way in which $(A_1,A_2)$ is chosen. We thus have $n(A'_1)>n(A'_2)$, and so $n(A'_1)\geq n/2$. Since each group has size at most $n/2$, we have $n/2\leq n(A'_1)\leq n(A_1)+n/2<3n/4$, and hence $n/2\geq n(A'_2)=n-n(A'_1)>n/4$. This proves the claim.

For any $M\subseteq \calT$, let $OPT(M)$ denote the minimum cost of any partition of $M$ in which each \group is $t$-close to $\calT$; thus the optimal cost of the problem is $OPT(\calT)$. We now have a natural recursive algorithm for computing $OPT(\calT)$: Enumerate all $\calT_1 \subseteq \calT$ with $n/4\leq |\calT_1|\leq 3n/4$ and find the one minimizing $OPT(\calT_1)+OPT(\calT\setminus\calT_1)$; denote this minimum cost by $OPT_1$. We also exhaustively find $\calT'_1 \subseteq \calT$ with $|\calT'_1|>n/2$ that minimizes $OPT(\calT'_1)+OPT(\calT\setminus\calT'_1)$, which is denoted by $OPT_2$. By our previous analysis, $OPT(\calT)=\min\{OPT_1,OPT_2\}$ and thus we can solve $t$-\textsc{Closeness} by taking the better solution. Two notes on the recursive steps: (1) If we have a table of constant size (say, less than 10) then we can directly solve it in $O(m)$ time by the brute-force approach. (2) If we have a table $\calT'$ such that $\emd(\prob(\calT'),\prob(\calT))>t$ then we return with cost $+\infty$.

We now analyze the running time of the algorithm. Let $f(s)$ denote the running time on a sub-table of $\calT$ of size $s$. When $s\leq 10$ we have $f(s)=O(m)$, and when $s>10$,
\begin{eqnarray*}
f(s)&\leq& \sum_{i=s/4}^{3s/4}{s \choose i}\cdot 2f(3s/4)+\sum_{i=s/2}^{s}{s\choose i}f(s/2)+O(2^s)\\
&\leq& 2^{s+2}f(3s/4)+O(2^s).
\end{eqnarray*}
In the first inequality, the first term stands for the time of enumerating $\calT_1$ with $n/4\leq |\calT_1|\leq 3n/4$, the second term is for the enumeration of $\calT'_1$ with $|\calT'_1|>n/2$, and the third term is responsible for other works such as recording the subsets.
It is easy to verify that this recursion gives $f(n)\leq 2^{O(n)}\cdot O(m)$.
\end{proof}

In many real applications, there are usually only a small number of attributes and distinct attribute values. Thus it is interesting to see whether $t$-\textsc{Closeness} can be solved more efficiently when $m$ and $|\Sigma|$ is small. We answer this question affirmatively in terms of fixed-parameter tractability.
\begin{theorem}
$t$-\textsc{Closeness} is fixed-parameter tractable when parameterized by $m$ and $|\Sigma|$. Thus we can solve $t$-\textsc{Closeness} optimally in polynomial time when $m$ and $|\Sigma|$ are constants.
\end{theorem}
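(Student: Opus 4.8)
The plan is to reduce $t$-\textsc{Closeness} to a mixed integer linear program (MILP) whose number of \emph{integer} variables is bounded by a function of $m$ and $|\Sigma|$ alone, while the rest of its size stays polynomial in the input; such programs can be solved in fixed-parameter time by known algorithms.

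First I would replace the table by a compact description. Since each row lies in $\Sigma^{m+1}$, there are at most $|\Sigma|^{m+1}$ distinct row \emph{types}, a constant, and the instance is determined by the multiplicity $c_\tau\le n$ of each type $\tau$. Likewise, after suppression the QI part of any \group is one of at most $(|\Sigma|+1)^m$ \emph{patterns} $g\in(\Sigma\cup\{\star\})^m$; let $s(g)$ be the number of $\star$'s in $g$, and call a type \emph{compatible} with $g$ if it agrees with $g$ on every non-$\star$ position. The key structural step is to show that an optimal $t$-closeness partition may be assumed to use at most one \group per pattern, hence to have at most $(|\Sigma|+1)^m$ \groups altogether. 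Indeed, if two \groups $A,B$ have the same generalization $g$, then every row of $A\cup B$ still agrees with $g$ on its non-$\star$ positions, so the generalization of $A\cup B$ suppresses only columns already suppressed in $g$ and $cost(A\cup B)\le cost(A)+cost(B)$; moreover $\prob(A\cup B)$ is a convex combination of $\prob(A)$ and $\prob(B)$, while $\mathbf{X}\mapsto\emd(\mathbf{X},\prob(\calT))$ is convex (being the optimal value of a linear program as a function of its right-hand side), so merging preserves $t$-closeness. Iterating the merge yields an optimum with pairwise distinct patterns.

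Next I would set up the MILP. For each pattern $g$ and each type $\tau$ compatible with $g$, introduce an integer variable $y_{g,\tau}\ge 0$ recording how many type-$\tau$ rows go into the \group with pattern $g$; as both indices range over sets of size bounded in $m$ and $|\Sigma|$, so does the number of integer variables. The constraints $\sum_{g}y_{g,\tau}=c_\tau$ (over patterns $g$ compatible with $\tau$) enforce that we obtain a partition. To linearize the $t$-closeness requirement — which a priori concerns the \emph{normalized} distribution — I would fold in the linear program defining EMD, scaled by the (variable) \group size: for each $g$ introduce reals $F_g(i,j)\ge 0$ with $\sum_j F_g(i,j)=\sum_{\tau:\tau[m+1]=i}y_{g,\tau}$, $\sum_i F_g(i,j)=\prob(\calT)[j]\cdot\sum_{\tau}y_{g,\tau}$, and $\sum_{i,j}d(i,j)F_g(i,j)\le t\cdot\sum_{\tau}y_{g,\tau}$. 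Dividing $F_g$ by the \group size turns it into a transport plan of cost at most $t$, so these constraints hold precisely when the \group is $t$-close (and are trivially satisfied, with $F_g=0$, when it is empty). Minimizing the linear objective $\sum_g s(g)\sum_{\tau}y_{g,\tau}$ then models the suppression cost, and a routine two-way argument shows the MILP optimum equals the optimal $t$-\textsc{Closeness} cost: a pattern-distinct optimal partition gives a feasible point of equal value, and any feasible point decodes into a $t$-closeness partition of cost at most the objective value.

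To finish, note that the number of continuous variables, the number of constraints, and the bit-lengths of all coefficients (the $c_\tau$; the entries of $\prob(\calT)$ after clearing the common denominator $n$; the given $d(i,j)$ and $t$) are all polynomial in the input, while the number of integer variables depends only on $m$ and $|\Sigma|$. Hence the MILP can be solved in time $h(m,|\Sigma|)\cdot(\text{input size})^{O(1)}$ by standard algorithms for mixed integer linear programming with a bounded number of integer variables, which establishes fixed-parameter tractability and, in particular, polynomial time when $m$ and $|\Sigma|$ are constants. The part I expect to require the most care is exactly what makes the program both small and faithful: proving that merging \groups with a common generalization pattern never hurts (so that a constant number of \groups suffices) and encoding the ratio-based $t$-closeness condition as genuinely linear constraints via the size-scaled EMD program.
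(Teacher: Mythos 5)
Your proposal is correct and follows essentially the same route as the paper: compress the table into row types, introduce one integer variable per (generalization pattern, type) pair, linearize the $t$-closeness condition by scaling the EMD transportation program by the group size, and invoke a Lenstra-type algorithm for mixed integer programs with few integer variables. The only difference is that you explicitly justify, via convexity of $\mathbf{X}\mapsto\emd(\mathbf{X},\prob(\calT))$, that one may merge groups sharing a generalization pattern --- a step the paper's formulation uses implicitly without proof --- so your write-up is, if anything, slightly more complete.
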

\begin{proof}
Consider an input table $\calT$ with $n$ rows and $m+1$ columns (of which $m$ are QIs and one is SA). For $v\in \Sigma^m$ and $s\in \Sigma_s$, denote by $R_{v,s}$ the set of vectors in $\calT$ that is identical to $(v,s)$, and let $r_{v,s}=|R_{v,s}|$. We thus have $\sum_{v\in \Sigma^m,s\in \Sigma_s}r_{v,s}=n$. We write a integer linear program to characterize the minimum cost of a $t$-closeness partition of $\calT$. For every $v\in \Sigma^m$ and $s\in \Sigma_s$ such that $R_{v,s}\neq \emptyset$, and every $v^* \in (\Sigma\cup\{\star\})^m$ that generalizes $v$, there is a nonnegative integer variable $x(v^*,v,s)$ which means the number of vectors in $R_{v,s}$ that is generalized to $(v^*,s)$ in the partition. We clearly have
\begin{equation}\label{lp:1}
\sum_{v^*:v^* \textrm{~generalizes~} v}x(v^*,v,s)=r_{v,s}, ~~\forall (v,s)\textrm{~s.t.~}R_{v,s}\neq \emptyset.
\end{equation}
Each $v^*\in (\Sigma\cup\{\star\})^m$ induces a \group, denoted $G_{v^*}$, which consists of all vectors whose QI values are generalized to $v^*$. Those \groups together form a partition (note that some \group may be empty). Denoting by $C_{v^*}$ the number of `$\star$'s in $v^*$, the cost of the partition is precisely $\sum_{v^*,v,s}C_{v^*}\cdot x(v^*,v,s)$. Thus the objective function is
\begin{equation}\label{lp:obj}
\textrm{Minimize~}\sum_{v^*,v,s}C_{v^*}\cdot x(v^*,v,s)\;.
\end{equation}
We still need other constraints to ensure that each \group $G_{v^*}$ either is empty or has $t$-closeness.
We do this by adding a set of constraints, for every $v^*$, that characterizes the transportation between the \vects of $G_{v^*}$ and $\calT$ as in the definition of EMD.
First assume that $G_{v^*}$ is non-empty.
We have $|G_{v^*}|=\sum_{v\in\Sigma^m,s\in \Sigma_s}x(v^*,v,s)$.
The probability mass of $i\in \Sigma_s$ in $\prob(G_{v^*})$ is
$\sum_{v\in\Sigma^m}x(v^*,v,i)/|G_{v^*}|$, and that in $\prob(\calT)$ is
$\sum_{v\in\Sigma^m}r_{v,i}/n$. For $i,j\in \Sigma_s$, let $f(v^*,i,j)$ denote the amount of mass moved from $i$ to $j$
in order to transform $\prob(G_{v^*})$ to $\prob(\calT)$. Let $d_{i,j}$ be the distance between $i$ and $j$ in the \saspace. To guarantee the $t$-closeness of $G_{v^*}$ we can write the following constraints:
\begin{eqnarray*}
\sum_{j\in \Sigma_s}f(v^*,i,j)&=&\sum_{v\in\Sigma^m}x(v^*,v,i)/|G_{v^*}|,~\forall i\in \Sigma_s\\
\sum_{i\in \Sigma_s}f(v^*,i,j)&=&\sum_{v\in\Sigma^m}r_{v,j}/n,~\forall j\in\Sigma_s\\
\sum_{i,j\in\Sigma_s}d_{i,j}\cdot f(v^*,i,j)&\leq& t\\
f(v^*,i,j)&\geq& 0,~\forall i,j\in \Sigma_s\;.
\end{eqnarray*}
The first constraint above is not linear. To overcome this, we define $g(v^*,i,j)=f(v^*,i,j)\cdot |G_{v^*}|$, substitute $g(v^*,i,j)$
for $f(v^*,i,j)$ in the above constraints, and expand $|G_{v^*}|$. This produces the following equivalent constraints:
\begin{eqnarray*}
\sum_{j\in \Sigma_s}g(v^*,i,j)&=&\sum_{v\in\Sigma^m}x(v^*,v,i),~\forall i\in \Sigma_s\\
n\sum_{i\in \Sigma_s}g(v^*,i,j)&=&\sum_{v\in\Sigma^m}r_{v,j}\sum_{v\in\Sigma^m,s\in \Sigma_s}x(v^*,v,s),\forall j\in\Sigma_s\\
\sum_{i,j\in\Sigma_s}d_{i,j}\cdot g(v^*,i,j)&\leq& t\cdot\sum_{v\in\Sigma^m,s\in \Sigma_s}x(v^*,v,s)\\
g(v^*,i,j)&\geq& 0,~\forall i,j\in \Sigma_s\;.
\end{eqnarray*}
Note that these constraints hold even if $G_{v^*}$ is empty. Thus they force \group $G_{v^*}$ to be $t$-closeness or empty.
The set of such constraints for all $v^*$, together with (\ref{lp:1}) and (\ref{lp:obj}), compose a \emph{mixed} integer linear program (i.e., only some of the variables
are required to take integer values)
that precisely characterizes the $t$-\textsc{Closeness}
problem on $\calT$.\footnote{A technical issue here is that, in order to apply results for mixed integer linear program, $t$ needs to be a rational number. Nevertheless, for irrational $t$ we can use rationals to approximate the value of $t$ to an arbitrary precision.} The number of variables in the program is $N\leq |\Sigma|^m(|\Sigma|+1)^m|\Sigma_s|+(|\Sigma|+1)^m|\Sigma_s|^2
\leq 2(|\Sigma|+1)^{2m+1}$. The time spent on constructing and writing down this linear program is polynomial in $n, m$, and $N$.
By the result in \cite{mixed_ilp} (Section 5 of it deals with mixed ILP), a mixed linear integer program
with $N$ variables can be solved in $N^{O(N)}L$ time, where $L$ is the number of bits used to encode the program. In our case $L$ is polynomial in $n$ and $m$.
Therefore, we can solve this program, and hence solve $t$-\textsc{Closeness}, in $h(m,|\Sigma|)n^{O(1)}$ time for some function $h$.
This shows that $t$-\textsc{Closeness} is fixed-parameter tractable when parameterized by $m$ and $|\Sigma|$.
\end{proof}

\section{Approximation Algorithm for $k$-Anonymity}\label{sec:kanony}
In this section we give a polynomial-time $m$-approximation algorithm for $k$-\textsc{Anonymity}, which improves the previous best ratio $O(k)$ \cite{icdt05_kanony} and $O(\log k)$ \cite{sigmod07_apx_kanony} when $k$ is relatively large compared with $m$. (We note that the $O(\log k)$-approximation algorithm given in \cite{sigmod07_apx_kanony} is not guaranteed to run in polynomial time for super-constant $k$, while our result holds for all $k$.)
\begin{theorem}\label{thm:kanony_apx}
$k$-\textsc{Anonymity} can be approximated within factor $m$ in polynomial time.
\end{theorem}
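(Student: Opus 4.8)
The plan is to give a polynomial-time algorithm whose output partition has cost at most $m\cdot OPT$, where $OPT$ is the optimum (we may assume the instance is feasible, i.e.\ $n=\size{\calT}\ge k$). Call a QI-vector $v\in\Sigma^m$ \emph{frequent} if it occurs in at least $k$ rows of $\calT$ and \emph{rare} otherwise; let $C_v$ be the type-class of $v$, let $B\subseteq\calT$ be the set of rows whose QI-vector is rare, and write $n_b=\size{B}$. The algorithm first places every frequent type-class into its own group (these groups have cost $0$), and the only remaining question is how to distribute $B$. If $n_b=0$ we are done. If $n_b\ge k$ we make $B$ a single extra group. If $0<n_b<k$, let $\sigma=\sum_{v\ \mathrm{frequent}}(\size{C_v}-k)$ be the total ``spare capacity'': if $\sigma\ge k-n_b$ we steal $k-n_b$ rows from frequent classes (each donating at most $\size{C_v}-k$ rows, so it remains frequent) and form a group of size exactly $k$ with $B$; if $\sigma<k-n_b$ and there is only one frequent class we instead output the single-group partition $\{\calT\}$; and if $\sigma<k-n_b$ but there are at least two frequent classes we merge $B$ with the smallest frequent class $C^\star$. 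All steps are clearly polynomial-time, and every group except the one created for $B$ has cost $0$.

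The analysis rests on two easy lower bounds. In any $k$-anonymous partition, a group containing a rare row must also contain rows of a second type-class (otherwise it lies inside one class and has size $\ge k$, so that class is frequent), hence it suppresses at least one column and costs at least its own size; summing over all such groups gives $OPT\ge n_b$, and the same observation gives $OPT\ge k$ whenever $n_b\ge 1$. These settle the easy cases: the $B$-group built when $n_b\ge k$ costs at most $m\,n_b\le m\cdot OPT$; the stolen group built when $\sigma\ge k-n_b$ has size exactly $k$, hence cost at most $mk\le m\cdot OPT$; and when $\sigma<k-n_b$ with a single frequent class we have $n<2k$, so $\{\calT\}$ is in fact the only feasible partition and is trivially optimal.

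The remaining case — $0<n_b<k$, at least two frequent classes, and $\sigma<k-n_b$ — is where the real work lies, and I expect it to be the main obstacle. Our algorithm pays $cost(B\cup C^\star)\le m\bigl(n_b+\min_v\size{C_v}\bigr)$, so it suffices to prove the structural bound $OPT\ge n_b+\min_v\size{C_v}$. The idea is to look at the groups of an optimal partition that suppress at least one column (the ``impure'' groups); by the argument above they jointly contain all of $B$. Each frequent class $C_v$ contributes to the impure groups either all $\size{C_v}$ of its rows, or at most $\size{C_v}-k$ of them (if it also sits in a ``pure'' single-type group, which must have size $\ge k$). If the optimal partition has a single impure group, that group must supply at least $k-n_b>\sigma$ frequent-class rows to accompany $B$, so at least one frequent class is contributed in full, giving group size $\ge n_b+\min_v\size{C_v}$; if it has two or more impure groups their sizes already sum to $\ge 2k$, and in this regime $\min_v\size{C_v}\le\sigma+k<2k-n_b$, so $2k>n_b+\min_v\size{C_v}$. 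Either way $OPT\ge n_b+\min_v\size{C_v}$, which finishes the proof. The delicate points are exactly the bookkeeping here: verifying that rare rows can only lie in impure groups, that a split frequent class always leaves a pure part of size at least $k$, and that the inequalities $\min_v\size{C_v}\le\sigma+k$ and $\sigma<k-n_b$ combine to keep $n_b+\min_v\size{C_v}$ below $2k$.
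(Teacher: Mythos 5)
Your algorithm and case analysis are essentially identical to the paper's: the same frequent/rare split, the same three regimes (merge the rare rows alone, steal spare capacity to form a size-$k$ group, or merge the rare rows with the smallest frequent class), and the same lower bounds $OPT\ge n_b$, $OPT\ge k$, and $OPT\ge n_b+\min_v|C_v|$. The only difference is cosmetic: you prove the last bound by counting rows in ``impure'' groups (one vs.\ at least two such groups), while the paper argues that if every frequent class had a zero-cost row then too few rows would remain to form a $k$-anonymous group; both arguments are correct.
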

\begin{proof}
Consider a table $\calT$ with $n$ rows and $m$ QI columns. Denote by $OPT$ the minimum cost of any $k$-anonymous partition of $\calT$. Partition $\calT$ into ``equivalence classes'' $C_1,\ldots,C_R$ in the following sense: any two vectors in the same class are identical, i.e., they have the same value on each attribute, while any two vectors from different classes differ on at least one attribute. Assume $|C_1| \leq |C_2|\leq \ldots \leq |C_R|$. If $|C_1|\geq k$, then these classes form a $k$-anonymous partition with cost 0, which is surely optimal. Thus we assume $|C_1|<k$, and let $L\in[R]$ be the maximum integer for which $|C_L|<k$. Then $|C_{L'}|\geq k$ for all $L<L'\leq R$.
It is clear that each vector in $C_1\cup \ldots \cup C_L$ contributes at least one to the cost of any partition of $\calT$. Thus $OPT\geq \sum_{i=1}^{L}|C_i|$.

\textbf{Case 1:} $\sum_{i=1}^{L}|C_i|\geq k$. In this case we partition $\calT$ into $R-L+1$ \groups: $\{C_1\cup\ldots \cup C_L, C_{L+1},C_{L+2},\ldots, C_R\}$. This is a $k$-anonymous partition of cost at most $m\cdot \sum_{i=1}^{L}|C_i|\leq m\cdot OPT$.

\textbf{Case 2:} $\sum_{i=1}^{L}|C_i|<k$ and $\sum_{i=1}^{L}|C_i|+\sum_{i=L+1}^{R}(|C_i|-k)\geq k$. We choose $C'_i\subseteq C_i$ for $L+1\leq i\leq R$ satisfying that $|C_i\setminus C'_i|\geq k$ and $\sum_{i=1}^{L}|C_i|+\sum_{i=L+1}^{R}|C'_i|=k$. This can be done because of the second condition of this case. We partition $\calT$ into $R-L+1$ \groups: $\{\bigcup_{i=1}^{L}C_i \cup \bigcup_{i=L+1}^{R}C'_i, C_{L+1}\setminus C'_{L+1},\ldots, C_R\setminus C'_R\}$. This is a $k$-anonymous partition of cost at most $m\cdot k \leq m\cdot OPT$, since $OPT\geq k$.

\textbf{Case 3:} $\sum_{i=1}^{L}|C_i|+\sum_{i=L+1}^{R}(|C_i|-k)<k$. We claim that there exists $i \in\{L+1,\ldots,R\}$ such that any vector in $C_i$ contributes at least one to the cost of any $k$-anonymous partition. Assume the contrary. Then there exists a $k$-anonymous partition such that, for every $L+1\leq i\leq R$, there is a vector $v \in C_i$ whose suppression cost is 0, which means that $v$ belongs to a \group that only contains vectors in $C_i$; denote this \group by $C'_i$. We also know that there is at least one \group in the partition that has positive cost. However, by removing all $C'_i$, $L+1\leq i\leq R$, from $\calT$, the number of vectors left is at most
$n-k(R-L)=\sum_{i=1}^{R}|C_i|-k(R-L)=\sum_{i=1}^{L}|C_i|+\sum_{i=L+1}^{R}(|C_i|-k)<k$,
due to the condition of this case. This contradicts with the property of $k$-anonymous partitions. Therefore the claim holds, i.e., there exists $j\in\{L+1,\ldots,R\}$ such that any vector in $C_j$ contributes at least one to the partition cost. Thus we have $OPT\geq \sum_{i=1}^{L}|C_i|+|C_j|\geq \sum_{i=1}^{L+1}|C_i|$. We partition $\calT$ into $R-L$ \groups: $\{\bigcup_{i=1}^{L+1}C_i,C_{L+2},\ldots,C_{R}\}$. This is a $k$-anonymous partition with cost at most $m\cdot \sum_{i=1}^{L+1}|C_i|\leq m\cdot OPT$.

By the above case analyses, we can always find in polynomial time a $k$-anonymous partition of $\calT$ with cost at most $m \cdot OPT$. This completes the proof of Theorem~\ref{thm:kanony_apx}.
\end{proof}

We note that Theorem~\ref{thm:kanony_apx} implies that $k$-\textsc{Anonymity} can be solved optimally in polynomial time when $m=1$.
This is in contrast to $l$-\textsc{Diversity}, which remains NP-hard when $m=1$ (with unbounded $l$) \cite{tcs12_ldiversity}.

\section{Algorithm for $2$-Diversity}\label{sec:2diversity}
In this part we give the first polynomial time algorithm for solving 2-\textsc{Diversity}.
Let $\calT$ be an input table of $2$-\textsc{Diversity}. The following lemma is crucial to our algorithm.
\begin{lemma}\label{lem:2diversity}
There is an optimal 2-diverse partition of $\calT$ in which every \group consists of 2 or 3 vectors with distinct SA values.
\end{lemma}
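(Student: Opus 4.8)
The plan is to start from an arbitrary optimal $2$-diverse partition $\calP$ of $\calT$ and repeatedly transform it, without increasing the cost and without destroying $2$-diversity, until every group has size $2$ or $3$ and, within each group, no two vectors share an SA value. The key observation is that the cost of a group $\calT_p$ only depends on which QI coordinates are constant across $\calT_p$; splitting a group into parts can only suppress fewer (never more) coordinates in each part, so \emph{any} refinement of $\calP$ has cost at most $cost(\calP)$. Hence the whole difficulty is purely combinatorial: show that one can refine $\calP$ into groups of size $2$ or $3$ that remain $2$-diverse.

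First I would reduce to a single group: it suffices to show that any $2$-diverse multiset $M$ of vectors can be partitioned into parts, each of size $2$ or $3$ and each with all-distinct SA values, so that each part is $2$-diverse (size-$2$ part with two distinct SA values and size-$3$ part with three distinct SA values are automatically $2$-diverse). Then applying this to every group of $\calP$ and taking the union of the resulting parts yields a refinement with the desired property and no larger cost. For the single-group claim, let $|M|=N$ and let the SA-value multiplicities be $n_1 \ge n_2 \ge \cdots$; $2$-diversity of $M$ means $n_1 \le N/2$. The natural approach is a greedy/pairing argument: repeatedly pick vectors with the two (or three) currently most frequent \emph{distinct} SA values and pull them out as a group, which keeps the ``majority-value-at-most-half'' invariant on the remainder. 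One clean way to organize this is by induction on $N$: if $N=2$ or $N=3$ we are done; if $N$ is even, remove two vectors with distinct SA values — ideally one of value $1$ (the most frequent) and one of value $2$ — and check the remainder of size $N-2$ still satisfies majority $\le (N-2)/2$; if $N$ is odd ($N \ge 5$), remove three vectors with three distinct SA values (which exist because $n_1 \le N/2 < N$ forces at least two distinct values, and $n_1+n_2 \le N$ with $N$ odd forces a third unless $N\le 3$), then induct on $N-3$.

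The delicate point — and the step I expect to be the main obstacle — is verifying that the invariant $n_1 \le N/2$ is preserved by the removal step, especially in boundary cases where the multiset is ``barely'' $2$-diverse (e.g. $n_1 = N/2$ exactly, or there are only two distinct SA values). For instance, if there are exactly two distinct SA values with counts $N/2, N/2$ and $N$ is even, removing one of each leaves $(N-2)/2,(N-2)/2$, which is fine; but if $N$ is odd one must handle $n_1 \le (N-1)/2$ and show a removal of size $3$ is possible and safe, which may require that there actually be a third value — hence the case analysis needs the parity bookkeeping to line up. I would handle this by a careful case split on the number of distinct SA values ($2$, versus $\ge 3$) and on the parity of $N$, choosing in each case which SA values to remove so as to decrease the top count whenever it is close to half. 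Once the single-group decomposition is established with the invariant maintained, the lemma follows immediately by applying it groupwise and invoking the monotonicity of cost under refinement noted above.
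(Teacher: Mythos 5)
Your proposal is correct and takes essentially the same route as the paper: reduce to refining each group separately (cost is monotone under refinement), then induct on the group size by repeatedly extracting vectors carrying the most frequent SA values and checking that the ``majority at most half'' invariant survives. The paper's only simplification over your sketch is that it always peels off a \emph{pair} consisting of one vector of each of the two most frequent values, with no parity split or triple-removal step, since a $2$-diverse multiset of size $3$ automatically has three distinct SA values and so serves as a base case; its four-case analysis ($a_1=1$; $k=2$; $a_1>a_3$; $a_1=a_3\ge 2$) is exactly the invariant check you deferred, which does go through.
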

\begin{proof}
It suffices to show that any 2-diverse sub-table $M \subseteq \calT$ can be further partitioned into \groups each of which consists of 2 or 3 vectors with distinct SA values (note that partitioning a \group does not increase the generalization cost). We use induction on the size of $M$. When $|M|=2$ or $3$ it can be verified directly. Now consider $M \subseteq \calT$ of size $t\geq 4$. Suppose $M$ contains $k$ SA values $\{1,2,\ldots,k\}$ where $k\geq 2$. Let $a_i$ be the number of vectors in $M$ with SA value $i$, for $i\in [k]$. Assume w.l.o.g. that $a_1\geq a_2\geq \ldots \geq a_k$. Let $A_1$ and $A_2$ be two vectors with SA value 1 and 2, respectively. Partition $M$ into $\{A_1,A_2\}$ and $M'=M\setminus \{A_1,A_2\}$. We only need to show that $M'$ is 2-diverse, so that we can use induction on it. We perform a case analysis as follows.
\begin{itemize}
\item $a_1=1$. Then $M'$ consists of at least two vectors with distinct SA values, and thus is 2-diverse.
\item $k=2$. Since $M$ is 2-diverse, we have $a_1=a_2$. Then $M'$ still contains the same number of SA values 1 and 2, so it remains 2-diverse.
\item $a_1\geq 2,k\geq 3,a_1>a_3$. The highest frequency of any SA value in $M'$ is $a_1-1 \leq |M|/2-1=|M'|/2$, and thus $M'$ is 2-diverse.
\item $a_1=a_3\geq 2,k\geq 3$. In this case $|M|\geq 3a_3$. The highest frequency of an SA value in $M'$ is $a_3$. We have $|M'|-2a_3=|M|-2-2a_3\geq a_3-2\geq 0$, so $M'$ is 2-diverse.
\end{itemize}
All the cases are covered above and hence Lemma~\ref{lem:2diversity} is proved.
\end{proof}

Giving Lemma~\ref{lem:2diversity}, the rest of the proof is basically the same with that of the polynomial-time tractability of 2-\textsc{Anonymity} given in \cite{icalp10_kanony}. We restate the proof for completeness.
We reduce 2-\textsc{Diversity} to a combinatorial problem called \textsc{Simplex Matching} introduced in \cite{stoc07_matching}, which admits a polynomial algorithm \cite{stoc07_matching}. The input of \textsc{Simplex Matching} is a hypergraph $H=(V,E)$ containing edges of sizes 2 and 3 with nonnegative edge costs $c(e)$ for all edges $e\in E$. In addition $H$ is guaranteed to satisfy the following \emph{simplex condition}: if $\{v_1,v_2,v_3\}\in E$, then $\{v_1,v_2\},\{v_2,v_3\},\{v_3,v_1\}$ are also in $E$, and $c(\{v_1,v_2\})+c(\{v_2,v_3\})+c(\{v_1,v_3\})\leq 2\cdot c(\{v_1,v_2,v_3\})$. The goal is to find a perfect matching of $H$ (i.e., a set of edges that cover every vertex $v\in V$ exactly once) with minimum cost (which is the sum of costs of all chosen edges).

Let $\calT$ be an input table of 2-\textsc{Diversity}. We construct a hypergraph $H=(V,E)$ as follows. Let $V=\{v_1,v_2,\ldots,v_n\}$ where $v_i$ corresponds to the vector $\calT[i]$. For every two vectors $\calT[i],\calT[j]$ (or three vectors $\calT[i],\calT[j],\calT[k]$) with distinct SA values, there is an edge $e=\{v_i,v_j\}$ (or $e=\{v_i,v_j,v_k\})$ with cost equal to $cost(\{\calT[i],\calT[j]\})$ (or $cost(\{\calT[i],\calT[j],\calT[k]\})$). Consider any 3D edge $e=\{v_i,v_j,v_k\}$. Since each column that needs to be suppressed in $\{\calT[i],\calT[j]\}$ must also be suppressed in $\{\calT[i],\calT[j],\calT[k]\}$, we have $c(e)/3\geq c(\{v_i,v_j\})/2$. Similarly, $c(e)/3\geq c(\{v_i,v_k\})/2$ and $c(e)/3\geq c(\{v_j,v_k\})/2$. Summing the inequalities up gives $2c(e)\geq c(\{v_i,v_j\})+c(\{v_i,v_k\})+c(\{v_j,v_k\})$. Therefore $H$ satisfies the simplex condition, and it clearly can be constructed in polynomial time. Call a 2-diverse partition of $\calT$ \emph{good} if every \group in it consists of 2 or 3 vectors with distinct SA values. Lemma~\ref{lem:2diversity} shows that there is an optimal 2-diverse partition that is good. By the construction of $H$, each good 2-diverse partition of $\calT$ can be easily transformed to a perfect matching of $H$ with the same cost, and vice versa. Hence, we can find an optimal 2-diverse partition of $\calT$ by using the polynomial time algorithm for \textsc{Simplex Matching} \cite{stoc07_matching}. We thus have:
\begin{theorem}\label{thm:2diversity}
2-\textsc{Diversity} is solvable in polynomial time.
\end{theorem}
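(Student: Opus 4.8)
The plan is to combine the structural Lemma~\ref{lem:2diversity} with a reduction to the \textsc{Simplex Matching} problem of \cite{stoc07_matching}, which is known to admit a polynomial-time algorithm. By Lemma~\ref{lem:2diversity} it suffices to search only among \emph{good} 2-diverse partitions, i.e., those in which every \group has exactly $2$ or $3$ vectors, all with pairwise-distinct SA values; such a partition never costs more than the optimum, so the best good partition is optimal overall.

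First I would build a hypergraph $H=(V,E)$ whose vertex set $V=\{v_1,\ldots,v_n\}$ is in bijection with the rows of $\calT$, putting a $2$-edge $\{v_i,v_j\}$ in $E$ exactly when $\calT[i]$ and $\calT[j]$ have distinct SA values, and a $3$-edge $\{v_i,v_j,v_k\}$ exactly when the three rows have pairwise-distinct SA values, in each case assigning the edge a cost equal to the generalization cost of the corresponding sub-table of $\calT$. A perfect matching of $H$ is then precisely a partition of the rows into such $2$- and $3$-element blocks, so good $2$-diverse partitions of $\calT$ correspond bijectively and cost-preservingly to perfect matchings of $H$; combined with Lemma~\ref{lem:2diversity}, a minimum-cost perfect matching of $H$ has the same cost as an optimal $2$-diverse partition, and can be translated back into one.

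The one step that needs genuine care is verifying that $H$ satisfies the \emph{simplex condition}. The closure part is immediate: if three rows have pairwise-distinct SA values then so does each of the three pairs, hence all three $2$-edges are present. For the cost part I would exploit monotonicity of suppression: every column that must be suppressed inside the $2$-row table $\{\calT[i],\calT[j]\}$ is also suppressed inside $\{\calT[i],\calT[j],\calT[k]\}$, so $c(\{v_i,v_j\})/2 \le c(\{v_i,v_j,v_k\})/3$, and symmetrically for the two other pairs; adding the three inequalities yields $c(\{v_i,v_j\})+c(\{v_i,v_k\})+c(\{v_j,v_k\}) \le 2\,c(\{v_i,v_j,v_k\})$, which is exactly the required subadditivity. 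Since $H$ has polynomially many vertices and edges and is constructible in polynomial time, invoking the \textsc{Simplex Matching} algorithm of \cite{stoc07_matching} finishes the proof.

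I expect the true difficulty of the whole argument to lie upstream, in Lemma~\ref{lem:2diversity}, where one must show (by induction on the size of a $2$-diverse sub-table, peeling off a block of two distinct-value vectors and checking $2$-diversity is preserved in the several frequency regimes) that any $2$-diverse sub-table refines into distinct-value blocks of size $2$ or $3$ without increasing cost. Granting that lemma, everything above — the hypergraph construction, the cost-preserving bijection with perfect matchings, and the simplex-inequality check — is essentially bookkeeping, closely paralleling the proof of polynomial-time solvability of $2$-\textsc{Anonymity} in \cite{icalp10_kanony}.
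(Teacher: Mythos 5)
Your proposal is correct and matches the paper's proof essentially step for step: both reduce to \textsc{Simplex Matching} via the same hypergraph of distinct-SA-value pairs and triples, verify the simplex cost inequality by the same column-monotonicity argument, and rely on Lemma~\ref{lem:2diversity} for the restriction to groups of size 2 or 3. No gaps.
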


\section{Conclusions}\label{sec:conclu}
This paper presents the first theoretical study on the $t$-closeness principle for privacy preserving.
We prove the NP-hardness of the $t$-\textsc{Closeness} problem for every constant $t\in[0,1)$, and give exact and fixed-parameter algorithms for the problem. We also provide conditionally improved approximation algorithm for $k$-\textsc{Anonymity}, and give the first polynomial time exact algorithm for 2-\textsc{Diversity}.

There are still many related problems that deserve further explorations, amongst which the most interesting one to the authors is designing polynomial time approximation algorithms for $t$-\textsc{Closeness} with provable performance guarantees. We conjecture that the best approximation ratio may be dependent on $n$ (e.g., $O(\log n)$). The parameterized complexity of $t$-\textsc{Closeness} with respect to other sets of parameters are also of interest. Some interesting parameters that have been studied for $k$-anonymity can be found in \cite{joco09,joco_para_kanony,fct11}.

\bibliographystyle{abbrv}
\bibliography{t-closeness}  

\end{document}